\newtheorem{theorem}{Theorem}
\newtheorem{lemma}[theorem]{Lemma}
\newtheorem{corollary}[theorem]{Corollary}
\newtheorem{proposition}[theorem]{Proposition}
\newtheorem{example}[theorem]{Example}
\newcommand{\calV}{\mathcal{V}}
\def\chk#1{#1^{\smash{\scalebox{.7}[1.4]{\rotatebox{90}{\guilsinglleft}}}}}
\def\zarA{{\mathscr{A}}}
\def\zarC{{\mathscr{C}}}
\def\lag{\text{lag}}
\def\BState{\State\hskip-\ALG@thistlm}
\def\sdp{{\text{(SDP)}}}
\def\NN{{\mathbb{N}}} 
\def\QQ{{\mathbb{Q}}} 
\def\RR{{\mathbb{R}}} 
\def\CC{{\mathbb{C}}} 
\def\ZZ{{\mathbb{Z}}} 
\def\rank{{\rm rank}}
\def\deg{{\rm deg}}
\def\mymid{{\,\,:\,\,}}
\def\bigO{\ensuremath{{\mathcal{O}}}}
\def\cc{{{C}}}
\def\spec{\mathscr{S}}
\newcommand{\calD}{\mathscr{D}}
\newcommand{\calR}{\mathcal{R}}
\newcommand{\calF}{\mathcal{F}}
\newcommand{\realset}[1]{{\mathsf{Z}_\RR(#1)}}
\newcommand{\zeroset}[1]{{\mathsf{Z}(#1)}}
\newcommand{\ideal}[1]{{\mathsf{I}(#1)}}
\def\jac{{D\,}} 
\def\crit{{\rm crit}\,}
\def\Id{{\rm I}}
\def\ss{{\mathbb{S}}}
\begin{document}

\begin{frontmatter}

\title{Solving rank-constrained semidefinite programs in exact arithmetic}

\author{Simone Naldi}
\address{Technische Universit\"at Dortmund \\ Fakult\"at f\"ur Mathematik \\ Vogelpothsweg 87, 44227 Dortmund}
\ead{simone.naldi@tu-dortmund.de}
\ead[]{http://www.mathematik.tu-dortmund.de/sites/simone-naldi}

\begin{abstract}
We consider the problem of minimizing a linear function over an affine section
of the cone of positive semidefinite matrices, with the additional constraint
that the feasible matrix has prescribed rank. When the rank constraint is active,
this is a non-convex optimization problem, otherwise it is a semidefinite program.
Both find numerous applications especially in systems control theory and combinatorial
optimization, but even in more general contexts such as polynomial optimization or
real algebra. While numerical algorithms exist for solving this problem, such as
interior-point or Newton-like algorithms, in this paper we propose an approach based
on symbolic computation. We design an exact algorithm for solving rank-constrained
semidefinite programs, whose complexity is essentially quadratic on natural degree
bounds associated to the given optimization problem: for subfamilies of the problem
where the size of the feasible matrix, or the dimension of the affine section,
is fixed, the algorithm is polynomial time. The algorithm works under assumptions
on the input data: we prove that these assumptions are generically satisfied. We
implement it in Maple and discuss practical experiments.
\end{abstract}

\begin{keyword}
Semidefinite programming, determinantal varieties, linear matrix inequalities,
rank constraints, exact algorithms, computer algebra, polynomial optimization, spectrahedra,
sums of squa\-res.
\end{keyword}
\end{frontmatter}

\section{Introduction}
\label{sec-intro}

\subsection{Problem statement}
\label{ssec-intro-prob}
Let $x=(x_1,\ldots,x_n)$ denote a vector of unknowns. We consider the standard semidefinite
programming (SDP) problem with additional rank constraints, as follows:
\begin{equation}
\label{sdpr}
\begin{aligned}
\sdp_r \,\,\,\,\,\,\,\,\,\,\,\,\, \inf_{x \in \RR^n} & \,\,\, \ell_c(x) \\
 s.t. & \,\,\,\,A(x) \succeq 0 \\
                       & \,\,\,\,\rank\,A(x) \leq r 
\end{aligned}
\end{equation}
In Problem \eqref{sdpr}, $\ell_c(x)=c^Tx$, $c \in \QQ^n$, $A(x)=A_0+x_1A_1+\cdots+x_nA_n$
is a symmetric linear matrix with
$A_i \in \ss_m(\QQ)$ (the set of symmetric matrices of size $m$ with entries in $\QQ$),
and $r$ is an integer, $0 \leq r \leq m$. The formula $A(x) \succeq 0$ means that $A(x)$
is positive semidefinite ({\it i.e.}, all its eigenvalues are nonnegative) and is called
a linear matrix inequality (LMI). Remark that for $r=m$ this is the standard semidefinite
programming problem since the rank constraint is inactive. Moreover, when $c=0$ ({\it i.e.},
$c$ is the zero vector), $\sdp_r$ is a rank-constrained LMI. In the whole paper, we refer
to $\sdp_r$ in Problem \eqref{sdpr} as a rank-constrained semidefinite program with parameters
$(m,n,r)$.
The set
\[
\spec = \left\{x \in \RR^n \mymid A(x) \succeq 0\right\},
\]
namely the feasible set of $\sdp_m$, is called a spectrahedron by the convex algebraic geometry
community, or equivalently LMI-set. It is a convex basic semialgebraic set. Conversely, for $r<m$,
$\sdp_r$ is no more a convex optimization problem, in general. Indeed, denoted by
\[
\calD_p = \left\{x \in \CC^n \mymid \rank\,A(x) \leq p\right\}
\]
the complex determinantal variety associated to $A(x)$ of maximal rank $p$, the feasible set of $\sdp_r$ is
exactly $\spec \cap \calD_r \cap \RR^n = \spec \cap \calD_r$. This is typically non-convex.

The purpose of this paper is to design an exact algorithm for solving problem $\sdp_r$.

\subsection{Contribution}
\label{ssec-intro-contr}

We suppose that the input data is defined over the rational numbers, namely $(c,A_0,A_1,\ldots,A_n)
\in \QQ^n \times (\ss_{m}(\QQ))^{n+1}$. By exact, we mean that, the output of the algorithm is either
an empty list, or a finite set $S$ encoded by a rational parametrization as in \cite{Rouillier99}.
This is the exact algebraic representation encoded by a vector $(q,q_0,q_1,\ldots,q_n) \subset
\QQ[t]$ of univariate polynomials, such that $q_0,q$ are coprime and:
\begin{equation}
\label{ratrep}
S = \left\{ \left( \frac{q_1(t)}{q_0(t)}, \ldots, \frac{q_n(t)}{q_0(t)}\right) \mymid q(t)=0 \right\}.
\end{equation}
When $S$ is not empty, the degree of $q$ is the algebraic degree of every element in $S$.
When the output is not the empty list, the set $S$ which is returned contains at least
one minimizer $x^*$ of $\sdp_r$.
Under general assumptions on input data, which are highlighted and discussed below, the strategy
to reach our main goal is twofold:
\begin{itemize}
\item
we prove that the {\it semialgebraic} optimization problem $\sdp_r$ can be reduced to
a ({\it finite}) {\it sequence} of {\it algebraic} optimization problems, that is, whose feasible set is
real algebraic;
\item
we design {\it exact algorithms} for solving the reduced algebraic optimization problems.
\end{itemize}
Once a rational parametrization $(q,q_0,q_1,\ldots,q_n)$ of $S$ is known, the coordinates of a minimizer
can be approximated by intervals of (arbitrary length) of rational
numbers, by isolating the real solutions of the univariate equation $q(t)=0$. The complexity of
the real root isolation problem is quadratic in the degree of $q$ and linear in the total bitsize
of its coefficients; for more information, {\it cf.} \cite{PanTsi}.

Once the output is returned, one can compute the list of minimizers by
sorting the set $S$ with respect to the value of the objective function $\ell_c(x)$, and
deleting the solutions lying out of the feasible set $\spec \cap \calD_r$: hence, our goal
is also to give a bound for the maximal size of the output set $S$, namely, on the
degree of $q$.

\subsection{Motivations}
\label{ssec-intro-motiv}
Several problems in optimization are naturally modeled
by (rank-constrained) semidefinite programming, SDP for short, see {\it e.g.} \cite{anjlas},
\cite{vandenberghe1996semidefinite} or \cite{ben2001lectures}.
Given $f,f_1,\ldots,f_s \in \RR[x]$, the general polynomial optimization
problem
\begin{equation}
\label{genpop}
\begin{aligned}
f^* = \inf_{x \in \RR^n} & \,\,\, f(x) \\
 s.t. & \,\,\,\, f_1(x) \geq 0, \ldots f_s(x) \geq 0 
\end{aligned}
\end{equation}
reduces to a sequence of semidefinite programs of increasing size, see {\it e.g.}
\cite{lasserre01globaloptimization} and \cite{parrilo}. Since this sequence is almost always
finite by \cite{nie14}, lots of efforts have been made in order to develop efficient algorithms for SDP.
Moreover, LMI and SDP conditions frequently appear in systems control theory {\it cf.} \cite{boyd1994linear}.
Finding low-rank positive semidefinite matrices also concerns the completion
problem for some classes of matrices in combinatorics \cite{LNV}.
Finally, an independent application of SDP-based techniques, but highly related to the
polynomial optimization problem, is that of checking nonnegativity of multivariate polynomials.
Indeed, deciding whether a given $f \in \RR[u_1,\ldots,u_k]$ is a SOS (sum of squares) of at most
$r$ polynomials (hence, nonnegative) is equivalent to a rank-constrained semidefinite program
(see Section \ref{sosdec} and, {\it e.g.}, \cite{PowersWor}). Keeping track of the length of
a SOS decomposition, or just deciding whether such a decomposition exists, is crucial in different
contexts, {\it cf.} \cite{blekherman2016low}.

\subsection{Previous work}
\label{ssec-intro-state}

The ellipsoid method in \cite{ellipsoid} translates into an iterative algorithm for solving
general convex optimization problem. The number number of its iterations is polynomial in the
input size (measured by the size $m$ of the matrix and by the number $n$ of variables) with
fixed precision, see {\it e.g.} \cite{anjlas}, but this algorithm is known to be inefficient
in practice. On the other hand, the extension of Karmakar's interior-point method beyond linear
programming by \cite{nestnem} yields efficient algorithms for computing floating point
approximations of a solution, implemented in several solvers such as SeDuMi, SOSTOOLS {\it etc}.

However, these algorithms cannot, in general, manage additional determinantal conditions or
non-convexity. Moreover, SDP relaxations of hard combinatorial optimization problems (as the
MAX-CUT, see \cite{goemans}) usually discard such algebraic constraints, since they break
desirable convexity properties. Moreover, interior-point algorithms cannot certify the emptiness
of the feasible set or the rank of the optimal solution, and can often suffer of numerical
round-off errors. Remark that if the standard SDP problem $\sdp_m$ has a solution $x^*$ of
rank $r$, then $x^*$ is also a solution of the non-convex problem $\sdp_r$ (the viceversa is
false, in general). Finally, one cannot extract information about the algebraic degree
\cite{stu} of the solution with numerical methods. The output of the algorithm designed in
this paper allows to recover important information about the solution, namely the algebraic
degree of the entries of the optimal matrix $A(x^*)$ and its rank.

In \cite{OHM}, Newton-like ``tangent and lift'' and projection methods for approximating a
point in the intersection of a linear space and a manifold are proposed: the authors use this
approach for solving rank constrained LMI but, in general, without guarantees of convergence,
and with the request of a starting feasible point. In \cite{HNS2015c} an exact algorithm for
LMI has been proposed. This algorithm, implemented in the Maple library SPECTRA \cite{spectra},
has a runtime essentially quadratic on a multilinear B\'ezout bound on the output degree,
and polynomial in $n$ (resp. in $m$) when $m$ (resp. $n$) is fixed. This last property is
shared with the algorithm in \cite{porkolab1997complexity}, which, however, cannot be used in
practice, since it crucially relies on quantifier elimination techniques. The algorithm in
\cite{greuet2} is also exact, but cannot manage semialgebraic constraints and has regularity
assumptions on the input, which are not satisfied in our case. The related problem of computing
witness points on determinantal algebraic sets has been addressed and solved in
\cite{HNS2014,HNS2015a}.

Our contribution builds on the approach of \cite{HNS2015c}, based on
the lifted representation of determinantal sets $\calD_r$ via incidence varieties, which is
recalled and adapted to our situation in Section \ref{incvar}. However, the geometric results
in Sections \ref{ssec:critpoints} and \ref{sec-fromsemialg} are crucial to allow to extend this
method to the rank-constrained SDP problem.

\subsection{Outline of main results}
\label{ssec-intro-mainresults}

We consider the rank-constrained semidefinite programming problem \eqref{sdpr}, encoded by rational
data $(c,A) \in \QQ^n \times \ss_m^{n+1}(\QQ)$, and by the integer $r$ bounding the rank of an optimal
solution. Our paper can be divided into two parts.

In the first part (Sections \ref{sec-prelim} and \ref{sec-fromsemialg}) we prove geometrical properties
of problem $\sdp_r$. 
In Section \ref{incvar}, we represent the algebraic sets $\calD_p, p=0, \ldots, r$, as projections
of incidence varieties defined by bilinear equations, that are generically smooth and equidimensional
(Proposition \ref{smoothness}). The solutions of $\sdp_r$ are also
local minimizers of $\ell_c$ on $\calD_p \cap \RR^n$ (this is proved in Theorem \ref{reduction}) and
are obtained as the projection of critical points of the same map restricted to the incidence varieties
(Lemma \ref{lemmaProjCrit}), which are finitely many (Proposition \ref{finiteness}). As an outcome,
we prove that a {\it generic} rank-constrained semidefinite program admits finitely many minimizers
(Corollary \ref{finitesolsdpr}).

The second part hosts the formal description of an algorithm for solving $\sdp_r$ (Section \ref{algorithm})
and its correctness (Theorem \ref{correctness}). A complexity analysis is then performed in Section
\ref{complexityanalysissection}, with explicit bounds on the size of the output set $S$ ({\it cf.}
\eqref{ratrep}) computed in Proposition \ref{degbound}. We finally discuss the results of numerical
tests performed via a first implementation of our algorithm in Section \ref{exper}.

This revised and extended version of the paper \cite{naldiIssac2016} published in the Proceedings
of ISSAC 2016, contains examples explaining our methodology and an extended experimental section,
showing results of our tests performed via the Maple library {\sc spectra}, {\it cf.} \cite{spectra}.

\section{Preliminaries}
\label{sec-prelim}

\subsection{General notation}
\label{ssec-intro-notation}
If $f = \{f_1, \ldots, f_s\} \subset \QQ[x]$, we denote by $\zeroset{f}$ the set of
complex solutions of $f_1=0, \ldots, f_s=0$, called a complex algebraic set. We also consider
real solutions of polynomial equations, that is the real algebraic set $\realset{f}=\zeroset{f} \cap \RR^n$.
If $S \subset \CC^n$, the ideal of polynomials
vanishing on $S$ is denoted by $\ideal{S}$. An ideal $I \subset \RR[x]$ is called radical
if it equals its radical $\sqrt{I}=\{f \in \RR[x] \mymid \exists\,s \in \NN, f^s \in I\}$.
An ideal of type $\ideal{S}$ is always a radical ideal.
By Hilbert's Nullstellensatz, one has $\ideal{\zeroset{I}} = \sqrt{I}$. The Jacobian matrix
of partial derivatives of $\{f_1, \ldots, f_s\}$ is denoted by $\jac f = (\frac{\partial f_i}{\partial x_j})_{i,j}$.

An algebraic set $V \subset \CC^n$ is called irreducible if it is not the union of two
proper algebraic subsets; otherwise it is the finite union of irreducible algebraic sets
$V=V_1 \cup \cdots \cup V_s$, called the irreducible components. The dimension of $V$ is the
Krull dimension of its coordinate ring $\CC[x]/\ideal{V}$. If the $V_i$ in the previous
decomposition have the same dimension $d$, then $V$ is equidimensional of dimension $d$.
Let $V \subset \CC^n$ be equidimensional of co-dimension $c$, and let $\ideal{V}=\langle
f_1, \ldots, f_s \rangle$. We say that $V$ is smooth if its singular locus, that is the
algebraic set defined by $f=(f_1, \ldots, f_s)$ and by the $c \times c$ minors of
$\jac f$, is empty. A set ${\cal E} = {\zeroset{I}} \setminus \zeroset{J}$ is called
locally closed, and its dimension is the dimension of its Zariski closure $\zeroset{\ideal{\cal E}}$.

If $V$ is equidimensional and smooth, and if $g \colon \CC^n \to \CC^m$ is an algebraic map, the critical
points of the restriction of $g$ to $V$ are denoted by $\crit(g,V)$, and defined by
$f=(f_1, \ldots, f_s)$ and by the $c+m$ minors of $\jac(f,g)$. Equivalently, a point $x \in V$
is critical for $g$ on $V$ if and only if the differential map $dg \colon T_xV \to \CC^m$ is
not surjective (where $T_xV$ is the Zariski tangent space of $V$ at $x$, {\it cf.} \cite[Sec.\,2.1.2]{Shafarevich77}).
The elements of $g(\crit(g,V))$ are the critical values, and the elements of $\CC^m \setminus g(\crit(g,V))$
are the regular values of the restriction of $g$ to $V$.

Let $S \subset \RR^n$ be any set, and let $f \colon \RR^n \to \RR$ be a continuous function with respect
to the Euclidean topology of $\RR^n$ and $\RR$. A point $x^* \in S$ is a local minimizer of $f$ on $S$,
if there exists an Euclidean open set $U \subset \RR^n$ such that $x^* \in U$ and $f(x^*) \leq f(x)$ for
every $x \in U \cap S$. A point $x^* \in S$ is a minimizer of $f$ on $S$ if $f(x^*) \leq f(x)$ for
every $x \in S$. In particular, if $\cc \subset S$ is a connected component of $S$, every minimizer
of $f$ on $\cc$ is a local minimizer of $f$ on $S$.

We finally recall the notation introduced previously. We consider $m \times m$ symmetric matrices
$A_0,A_1,\ldots,A_n \in \ss_{m}(\QQ)$, and a linear matrix $A(x) = A_0+x_1A_1+\cdots+x_nA_n$. The convex
set $\spec = \{x \in \RR^n \mymid A(x) \succeq 0\}$ is called a spectrahedron. 
The integer $r \in \NN$ will denote the maximal admissible rank in Problem \eqref{sdpr}.
Given an integer
$p \in \NN$, with $0 \leq p \leq r$, we denote by $\calD_p = \{x \in \CC^n \mymid \rank\,A(x) \leq p\}$
the determinantal variety of {maximal rank} $p$ generated by $A(x)$. 


\subsection{Representation via incidence varieties}
\label{incvar}
The algebraic set $\calD_p$ will not be represented as the vanishing locus of the $(p+1) \times
(p+1)$ minors of $A(x)$, mainly by two reasons. The first is that computing determinants is
a difficult task. Even if this first issue could be avoided by some precomputation, the
singularities of determinantal varieties appear generically. We are going to represent
$\calD_p$ as the projection of a more regular algebraic set, reviewing a classical construction.

Let $V$ be a vector space of dimension $d$ and let $\mathbb{G}(e,d)$ be the Grassmannian
of linear subspaces of dimension $e$ of $V$, with $e \leq d$. Fixed a basis of $V$, a point
$L = \text{span}(v_1,\ldots,v_e) \in \mathbb{G}(e,d)$ is represented by the $d \times e$
matrix whose columns are $v_1,\ldots,v_e$. With this in mind, we consider linear subspaces
of $\CC^m$ to model rank defects in $A(x)$.

Let $A(x) \in \ss_m^{n+1}(\QQ)$, and let $p,r \in \NN$, with $0 \leq p \leq
r \leq m$. We denote by $Y(y)=(y_{i,j})$ a $m \times (m-p)$ matrix with unknowns entries. Then,
for $x^* \in \CC^n$, $A(x^*)$ has rank at most $p$, if and only if there is $y^* \in \CC^{m(m-p)}$
such that $A(x^*)Y(y^*)=0$, with $\rank \,Y(y^*)=m-p$. Moreover, one can suppose that one of the maximal minors of $Y(y^*)$
is the identity matrix $\Id_{m-p}$ ({\it cf.} for example \cite[Sec.\,2]{FauSafSpa}).

For $\iota \subset \{1, \ldots, m\}$ with $\#\iota = m-p$, we denote by $Y_\iota$ the
maximal minor of $Y(y)$ whose rows are indexed by $\iota$. We deduce that $\calD_p$
is the image under the projection $\pi_n \colon \CC^n \times \CC^{m(m-p)} \to \CC^n$ of
the {algebraic set}
\[
\calV_p=
\bigcup_{{\begin{array}{c} \iota \subset \{1, \ldots, m\} \\ \#\iota=m-p \end{array}}} 
\calV_{p,\iota}
\]
where $\calV_{p,\iota}=\{(x,y) \in \CC^n \times \CC^{m(m-p)} \mymid A(x)Y(y)=0,
Y_\iota=\Id_{m-p}\}$. We call the sets $\calV_{p,\iota}$ {\it incidence varieties} for $\calD_p$.
We denote by $f(A,\iota)$ (often simply by $f$) the polynomial
system defining $\calV_{p,\iota}$. We prove the following Proposition on the
regularity of $\calV_{p,\iota}$.

\begin{proposition}
\label{smoothness}
Let $\iota \subset \{1,\ldots,m\}$ with $\#\iota = m-p$.
\begin{enumerate}
\item There is a subsystem $f_{red} \subset f(A,\iota)$ of cardinality $\# f_{red} = m(m-p)+\binom{m-p+1}{2}$
  such that $\zeroset{f_{red}}=\zeroset{f(A,\iota)}=\calV_{p,\iota}$.
\item There is a non-empty Zariski open set $\zarA \subset \ss_m^{n+1}(\CC)$ such that,
  if $A \in \zarA \cap \ss_m^{n+1}(\QQ)$, $\calV_{p,\iota}$ is either empty or smooth and equidimensional of co-dimension
  $m(m-p)+\binom{m-p+1}{2}$, and $f$ generates a radical ideal.
\end{enumerate}
\end{proposition}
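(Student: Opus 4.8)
The plan is to analyze the polynomial system $f(A,\iota)$ directly, exploiting the block structure induced by the normalization $Y_\iota = \Id_{m-p}$. First I would use the constraint $Y_\iota = \Id_{m-p}$ to eliminate the $(m-p)^2$ entries of $Y$ indexed by rows in $\iota$: on $\calV_{p,\iota}$ these are constants, so the free unknowns are $x \in \CC^n$ together with the $p(m-p)$ remaining entries of $Y$, call them $z$. The equations $A(x)Y(y)=0$ form an $m \times (m-p)$ matrix of bilinear polynomials in $(x,z)$; however, because $A(x)$ is symmetric and $Y$ has a normalized minor, the $m(m-p)$ scalar equations are not independent. Writing $A(x) = \begin{pmatrix} A_{11} & A_{12} \\ A_{12}^T & A_{22}\end{pmatrix}$ in block form adapted to $\iota$ (after a constant permutation), and $Y = \begin{pmatrix} Z \\ \Id_{m-p}\end{pmatrix}$ up to row permutation, the equation $A(x)Y = 0$ reads $A_{11}Z + A_{12} = 0$ (a $p \times (m-p)$ block) and $A_{12}^T Z + A_{22} = 0$ (an $(m-p)\times(m-p)$ block). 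The second block is symmetric when $Z$ is chosen so the first holds — more precisely, the lower-right block of $Y^T A(x) Y$ equals $A_{12}^T Z + A_{22} + Z^T(A_{11}Z + A_{12})$, and modulo the first block of equations this is symmetric. Hence $f_{red}$ consists of the $p(m-p)$ equations from $A_{11}Z + A_{12} = 0$ together with only the $\binom{m-p+1}{2}$ upper-triangular entries of the second block, giving the claimed cardinality $p(m-p) + (m-p)^2 + \binom{m-p+1}{2} = m(m-p) + \binom{m-p+1}{2}$ once one adds back the $(m-p)^2$ trivial equations $Y_\iota = \Id_{m-p}$ (or, cleaner, keep $\calV_{p,\iota}$ in the $(x,z)$-chart where these are absent — I would state part (1) in the ambient $\CC^n \times \CC^{m(m-p)}$ with the $(m-p)^2$ normalization equations counted, which matches the given count). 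This settles item (1): one checks $\zeroset{f_{red}} = \zeroset{f}$ because the discarded equations lie in the ideal generated by $f_{red}$ via the identity above.

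For item (2), the standard device is to exhibit the incidence correspondence and apply a Thom–Sard / weak transversality argument over the parameter space $\ss_m^{n+1}(\CC)$. Concretely, I would form the global incidence variety $\zarW = \{(A,x,y) \in \ss_m^{n+1}(\CC) \times \CC^n \times \CC^{m(m-p)} : f(A,\iota)(x,y) = 0\}$ and show that the projection $\zarW \to \ss_m^{n+1}(\CC)$ has the property that a generic fiber is smooth and equidimensional of the expected codimension. The key computation is that $\zarW$ itself is smooth: the Jacobian of $f_{red}$ with respect to the entries of $A$ alone is already surjective at every point of $\zarW$, because varying $A_0$ freely translates $A(x)$ by an arbitrary symmetric matrix, and one checks that the map "symmetric matrix $B \mapsto$ the relevant blocks of $B\cdot Y$" is surjective onto the $m(m-p)+\binom{m-p+1}{2}$-dimensional target whenever $\rank Y = m-p$ (which holds on $\calV_{p,\iota}$ by the normalization). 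Therefore $\zarW$ is smooth of codimension $m(m-p)+\binom{m-p+1}{2}$ in the product, and by the Jacobian criterion the ideal of $f_{red}$ is radical along $\zarW$. Then the algebraic Sard theorem (in the form: the set of critical values of $\zarW \to \ss_m^{n+1}(\CC)$ is contained in a proper Zariski closed subset, its complement being the desired $\zarA$) gives that for $A \in \zarA$ the fiber $\calV_{p,\iota}$ is smooth and equidimensional of the same codimension, with radical defining ideal — the codimension count being preserved because the base is irreducible and the generic fiber dimension equals $\dim \zarW - \dim \ss_m^{n+1}(\CC)$.

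The main obstacle I anticipate is the bookkeeping in item (1): verifying precisely which $m(m-p) - p(m-p) - \binom{m-p+1}{2} = \binom{m-p}{2}$ of the original scalar equations are redundant, and proving the redundancy is an honest membership in the ideal generated by $f_{red}$ (not merely a set-theoretic coincidence on $\calV_{p,\iota}$), so that part (2)'s radicality statement applies to $f$ and not just to $f_{red}$. This requires writing the antisymmetric part of $A_{12}^T Z + A_{22} + Z^T(A_{11}Z+A_{12})$ as an explicit $\QQ[x,y]$-linear combination of the entries of $A_{11}Z + A_{12}$; the symmetry $A_{12}^T Z - Z^T A_{12} = -(Z^T(A_{11}Z+A_{12}))^T + Z^T(A_{11}Z+A_{12})$ after using $A_{11} = A_{11}^T$ does the job, but it must be spelled out. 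A secondary subtlety is ensuring the open set $\zarA$ can be taken independent of $\iota$ (take a finite intersection over all $\iota$) and that it is genuinely non-empty, which follows since $\zarW \to \ss_m^{n+1}(\CC)$ is dominant — indeed its image contains all $A$ for which $\calD_p \neq \emptyset$, and one can always choose such an $A$ with the $\iota$-minor of a rank-$(m-p)$ kernel basis invertible.
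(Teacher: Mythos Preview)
Your proposal is correct and follows essentially the same route as the paper. For Part~1 the paper also normalizes $Y_\iota=\Id_{m-p}$, keeps the lower-triangular entries $g_{i,j}$ (with $i\geq j$) of $A(x)Y(y)$ together with the $(m-p)^2$ normalization equations, and proves the redundancy $g_{i,j}\equiv g_{j,i}$ modulo the ideal generated by the $p\times(m-p)$ block via exactly your symmetry identity, just written entry-by-entry rather than in your block form. For Part~2 the paper likewise shows that $0$ is a regular value of the map $(x,y,A)\mapsto f_{red}$ by exhibiting a full-rank submatrix of the Jacobian using derivatives with respect to the entries of $A_0$ (your ``varying $A_0$ freely'' argument) together with the $y_{i,j}$, $i\in\iota$, for the normalization block, and then invokes Thom's weak transversality and the Jacobian criterion---the same mechanism you call algebraic Sard / generic smoothness.

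Two minor bookkeeping points to tighten. First, when you assert surjectivity of $B\mapsto$ (blocks of $BY$) onto the ``$m(m-p)+\binom{m-p+1}{2}$-dimensional target'', note that the $(m-p)^2$ normalization equations do not depend on $A$ at all; the $A_0$-derivatives only cover the $(m-p)(m+p+1)/2$ equations coming from $A(x)Y$, and you need the $y_\iota$-derivatives for the rest (as the paper does). Second, the non-emptiness of $\zarA$ does not require dominance of $\zarW\to\ss_m^{n+1}(\CC)$: once $0$ is a regular value of the total map, the critical values of the section maps form a proper closed subset regardless of whether the generic fiber is empty (and the statement explicitly allows $\calV_{p,\iota}=\emptyset$).
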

\begin{proof}
We start with Point 1, by explicitely constructing the subsystem $f_{red}$.
Suppose w.l.o.g. that $\iota = \{1, \ldots, m-p\}$, and denote by $g_{i,j}$
the $(i,j)-$th entry of the matrix $A(x)Y(y)$ where $Y_\iota$ has been
substituted by $\Id_{m-p}$. Then $f_{red}$ is defined as follows:
$f_{red} = (g_{i,j} \,\, \text{for} \,\, i \geq j, Y_\iota-\Id_{m-p})$.

We prove now that $\zeroset{f_{red}}=\zeroset{f(A,\iota)}$. If $a_{i,j}$ is
the $(i,j)-$th entry of $A$, for $i<j$ one has that
$g_{i,j}-g_{j,i} = \sum_{\ell=m-p+1}^{m}a_{i,\ell}y_{\ell,j}-a_{j,\ell}y_{\ell,i}$,
since $A$ is symmetric. Using the polynomial relations $g_{k,\ell}=0$ for $k>m-p$ one
can solve for $a_{i,\ell}$ and $a_{j,\ell}$, and deduce
\[
\begin{aligned}
& g_{i,j} - g_{j,i} \equiv \\
& \equiv \sum_{\ell=m-p+1}^m \left(-\sum_{t=m-p+1}^{m}a_{\ell,t}y_{t,i}y_{\ell,j}+\sum_{t=m-p+1}^{m}a_{\ell,t}y_{t,j}y_{\ell,i} \right)\\
 & \equiv \sum_{\ell,t=m-p+1}^m a_{\ell,t} \left( -y_{t,i}y_{\ell,j}+y_{t,j}y_{\ell,i} \right) \equiv 0
\end{aligned}
\]
modulo $\left\langle g_{k,\ell}, \, k>m-p \right\rangle$. This proves Point 1.

We now give the proof of Point 2.
We denote by $\varphi$ the polynomial map $\colon \CC^{n+m(m-p)} \times \ss_m^{n+1}(\CC) \to \CC^{m(m-p)+\binom{m-p+1}{2}}$
sending $(x,y,A)$ to $f_{red}(x,y,A)$, and let $\varphi_A$ denote the section map $\varphi_A(x,y)=\varphi(x,y,A)$.
Hence $\varphi_A^{-1}(0)=\calV_{p,\iota}$. If $\varphi^{-1}(0)=\emptyset$, then for all $A \in
\ss^{n+1}_m(\CC)$, $\varphi_A^{-1}(0)=\calV_{p,\iota}=\emptyset$, and we conclude defining $\zarA=\ss_m^{n+1}(\CC)$.

If $\varphi^{-1}(0)\neq\emptyset$, we prove below that $0$ is a regular value of $\varphi$. We
deduce by Thom's Weak Transversality Theorem \cite[Sec.4.2]{din2013nearly} that there exists a non-empty
Zariski open set $\zarA_\iota \subset \ss_m^{n+1}(\CC)$ such that for $A \in \zarA_\iota$, $0$ is
a regular value of $\varphi_A$. We finally deduce by the Jacobian Criterion
\cite[Th.16.19]{Eisenbud95} that for $A \in \zarA_\iota$, $\calV_{p,\iota}$ is smooth and equidimensional
of co-dimension $m(m-p)+\binom{m-p+1}{2}$, and that the ideal generated by $f_{red}$ is radical.
We conclude defining $\zarA = \cap_{\iota} \zarA_\iota$.

Now we only have to prove that $0$ is a regular value of $\varphi$.
Let $\jac \varphi$ be the Jacobian matrix of $\varphi$. We denote by $a_{\ell,i,j}$ the
variable representing the $(i,j)-$th entry of $A$. We consider the derivatives of elements in $f_{red}$
with respect to:
\begin{itemize}
\item
  the variables $\eta=\{a_{0,i,j} \mymid i \leq m-p \,\,\, \text{or} \,\,\, j \leq m-p\}$;
\item
  the variables $y_{i,j}$ with $i \in \iota$.
\end{itemize}
Let $(x,y,A) \in \varphi^{-1}(0)$. The submatrix of $\jac \varphi(x,y,A)$ containing such derivatives,
contains the following non-singular blocks: the derivatives of $A(x)Y(y)$ w.r.t. elements in $\eta$,
that is a unit block $\Id_{(m-p)(m+p+1)/2}$; the derivatives of $Y_\iota-\Id_{m-p}$, that is a unit block
$\Id_{(m-r)^2}$. These two blocks are orthogonal, and we deduce that $\jac\varphi$ is full rank at
the point $(x,y,A)$. Since $(x,y,A)$ is arbitrary in $\varphi^{-1}(0)$, we conclude that $0$ is a regular
value of $\varphi$.
\end{proof}

\begin{example}
\label{exampleredundancies}
We construct an example of the relations among the polynomials defining $\calV_{p,\iota}$, computed by
Let $A(x) = (x_{i,j})_{i,j}$ be a $3 \times 3$ symmetric matrix of unknowns $x=(x_{11},x_{12},x_{13},x_{22},x_{23},x_{33})$.
We encode matrices of rank 1 in the pencil $A(x)$ with kernel configuration $\iota = \{1,2\} \subset \{1,2,3\}$
via the following polynomial equations:
\[
A(x) \cdot
\left(
\begin{array}{ccc}
1 & 0 \\
0 & 1 \\
y_{31} & y_{32} 
\end{array}
\right) = 0.
\]
Denoting with $f_{ij}$ the $(i,j)-$th entry of the previous matrix product, it is straightforward to check that 
$ f_{12}-f_{21} = y_{32}x_{3}-y_{31}x_{5} \equiv y_{31}x_{6}y_{32}-y_{32}x_{6}y_{31} = 0 $, modulo the ideal
$I=\langle f_{31}, f_{32} \rangle$.
\end{example}

\subsection{Critical points}
\label{ssec:critpoints}
%

In this section we consider polynomial systems encoding the local minimizers of the linear function
$\ell_c(x) \colon \RR^n \to \RR$ in \eqref{sdpr} restricted to the determinantal variety $\calD_p \cap \RR^n$,
with $0 \leq p \leq r$.
We denote by $L_c$ the map $L_c \colon \RR^{n+m(m-p)} \to \RR$ sending $(x,y)$ to $c^Tx$, that is
$L_c = \ell_c \circ \pi_n$, with $\pi_n \colon \RR^{n+m(m-p)} \to \RR^n$, $\pi_n(x,y)=x$. With analogy to the
description of $\calD_p$ via incidence varieties of the previous section, we consider the set
$\crit(\ell_c,\calV_{p,\iota} \cap \RR^{n+m(m-p)})$ of critical points of the restriction of $L_c$ to
$\calV_{p,\iota}\cap \RR^{n+m(m-p)}$.

\begin{lemma}
\label{lemmaProjCrit}
Let $\zarA \subset \ss_m^{n+1}(\CC)$ be the Zariski open set given in Proposition \ref{smoothness},
and let $A \in \zarA$. The set of local minimizers of $\ell_c$ on $\calD_p \cap \RR^n$ is contained
in the image of the union of the sets $\crit(L_c,\calV_{p,\iota})$, for $\iota \subset \{1, \ldots,
m\}$, with $\#\iota=m-p$, via the projection map $\pi_n(x,y)=x$.
\end{lemma}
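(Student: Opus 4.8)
The plan is to turn the statement into a standard first-order optimality argument, using the incidence varieties $\calV_{p,\iota}$ as smooth replacements for the (typically singular) determinantal variety $\calD_p$. Fix a local minimizer $x^*$ of $\ell_c$ on $\calD_p\cap\RR^n$. First I would produce a real lift of $x^*$ to one of the $\calV_{p,\iota}$: since $A(x^*)$ is a real symmetric matrix of rank $q\le p$, its kernel is a real subspace of $\RR^m$ of dimension $m-q\ge m-p$, so one may choose an $(m-p)$-dimensional real subspace $L\subseteq\ker A(x^*)$ and a real $m\times(m-p)$ matrix $K$ whose columns span $L$. As $\rank K=m-p$, some $(m-p)\times(m-p)$ submatrix $K_\iota$ of $K$ is invertible; replacing $K$ by $Y^*:=K\,K_\iota^{-1}$ gives a real matrix with $Y^*_\iota=\Id_{m-p}$ and column span still contained in $\ker A(x^*)$, hence $A(x^*)Y^*=0$. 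Thus $(x^*,Y^*)\in\calV_{p,\iota}\cap\RR^{n+m(m-p)}$ for this $\iota$, and since $\pi_n(x^*,Y^*)=x^*$ it suffices to prove $(x^*,Y^*)\in\crit(L_c,\calV_{p,\iota})$.

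Next I would check that $(x^*,Y^*)$ is a local minimizer of $L_c=\ell_c\circ\pi_n$ on $\calV_{p,\iota}\cap\RR^{n+m(m-p)}$. This is immediate from the factorization $L_c=\ell_c\circ\pi_n$ and the inclusion $\pi_n(\calV_{p,\iota}\cap\RR^{n+m(m-p)})\subseteq\calD_p\cap\RR^n$: if $U$ is a Euclidean neighbourhood of $x^*$ on which $\ell_c\ge\ell_c(x^*)$ throughout $\calD_p\cap\RR^n$, then $\pi_n^{-1}(U)$ is a Euclidean neighbourhood of $(x^*,Y^*)$ on which $L_c(x,y)=\ell_c(x)\ge\ell_c(x^*)=L_c(x^*,Y^*)$ for every $(x,y)$ in $\calV_{p,\iota}\cap\RR^{n+m(m-p)}$.

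Finally I would invoke smoothness. Since $A\in\zarA$, Proposition \ref{smoothness} guarantees that $\calV_{p,\iota}=\zeroset{f_{red}}$ is smooth and equidimensional of codimension $c=\#f_{red}$, so the (real) Jacobian $\jac f_{red}$ has full row rank $c$ at the real point $(x^*,Y^*)$. By the implicit function theorem over $\RR$, the real locus $\calV_{p,\iota}\cap\RR^{n+m(m-p)}$ is, near $(x^*,Y^*)$, a smooth submanifold whose tangent space at $(x^*,Y^*)$ is the Zariski tangent space $T_{(x^*,Y^*)}\calV_{p,\iota}=\ker\jac f_{red}(x^*,Y^*)$. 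The first-order necessary condition for a local minimum of the affine function $L_c$ on this manifold is exactly that $dL_c$ vanishes on $T_{(x^*,Y^*)}\calV_{p,\iota}$; in particular $dL_c\colon T_{(x^*,Y^*)}\calV_{p,\iota}\to\CC$ is not surjective, which by the definition of critical points recalled in Section \ref{ssec-intro-notation} means $(x^*,Y^*)\in\crit(L_c,\calV_{p,\iota})$. Since $\pi_n(x^*,Y^*)=x^*$, this proves the claim.

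The real content, and the reason the lemma is not a triviality, is that $\calD_p$ is in general singular at the very points one cares about, so first-order conditions cannot be applied to it directly; the incidence variety is smooth by Proposition \ref{smoothness}, and once one has lifted $x^*$ to a real point of $\calV_{p,\iota}$ and observed that local minimality is preserved under this lift, the Lagrange-type conclusion becomes legitimate. The only technical care needed is in the first step — choosing $L$, $K$ and the index set $\iota$ so that the normalization $Y_\iota=\Id_{m-p}$ holds with real entries, and noting that when $\rank A(x^*)<p$ one simply picks any $(m-p)$-dimensional subspace of the larger kernel — but this is routine linear algebra over $\RR$.
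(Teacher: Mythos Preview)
Your proof is correct and follows the same overall skeleton as the paper's: lift the local minimizer $x^*$ to a real point of some $\calV_{p,\iota}$, show this lift is a local minimizer of $L_c$ there, and then use the smoothness from Proposition~\ref{smoothness} to deduce criticality. The one noteworthy difference is in the middle step. You argue directly that $\pi_n(\calV_{p,\iota}\cap\RR^{n+m(m-p)})\subseteq\calD_p\cap\RR^n$, so pulling back a neighbourhood $U$ witnessing local minimality of $x^*$ immediately gives a neighbourhood $\pi_n^{-1}(U)$ witnessing local minimality of $(x^*,Y^*)$. The paper instead restricts to the connected component $\cc_{(\tilde x,\tilde y)}$ of the real incidence variety and argues by contradiction via a continuous semialgebraic path projecting down into $\cc_{\tilde x}$. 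Your route is more elementary and avoids the connected-component bookkeeping entirely; the paper's path argument is not needed for this lemma (though the connected-component viewpoint resurfaces elsewhere, e.g.\ in Theorem~\ref{reduction}). You are also more explicit than the paper in constructing a \emph{real} lift $Y^*$ via the real kernel of $A(x^*)$ and a suitable choice of $\iota$, which the paper leaves implicit.
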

\begin{proof}
Let $\tilde{x} \in \RR^n$ be a local minimizer of $\ell_c$ on $\calD_p \cap \RR^n$, and let
$\cc_{\tilde{x}} \subset \calD_p\cap \RR^n$ be the connected component containing $x$. Let
$t=\ell_c(\tilde{x})$. Then $\ell_c(x) \geq t$ for all $x \in U \cap \cc_{\tilde{x}}$, for
some $U$ connected open set. By definition
of $\calV_p$, and since $\tilde{x} \in \calD_p$, there exists $\iota\subset
\{1, \ldots, m-p\}$ and $\tilde{y} \in \RR^{m(m-p)}$ such that $(\tilde{x},\tilde{y}) \in
\calV_{p,\iota}$. Let $\cc_{(\tilde{x},\tilde{y})}$ be the connected component of $\calV_{p,\iota}
\cap \RR^{n+m(m-p)}$ containing $(\tilde{x},\tilde{y})$. We claim (and prove below) that
$(\tilde{x},\tilde{y})$ is a minimizer of $L_c$ on $\pi_n^{-1}(U)\cap \cc_{(\tilde{x},\tilde{y})}$, hence
local minimizer on $\pi_n^{-1}(U)\cap \calV_{p,\iota}$. We deduce that $t = \ell_c(\tilde{x})=L_c(\tilde{x},
\tilde{y})$ lies in the boundary of $L_c(\pi_n^{-1}(U) \cap \cc_{\tilde{x},\tilde{y}})$. In particular, the differential
map of $L_c$ at $x$ is not surjective: because $A \in \zarA$, then $\calV_{p,\iota}$ is smooth
and equidimensional, and hence $(\tilde{x},\tilde{y}) \in \crit(L_c,\calV_{p,\iota} \cap
\RR^{m(m-p)})$.

Now we prove our claim. Recall that $L_c(\tilde{x},\tilde{y})=\ell_c(\tilde{x})=t$, and suppose that there is
$(x,y) \in \pi_n^{-1}(U) \cap\cc_{(\tilde{x},\tilde{y})}$ such that $L_c(x,y)<t$. There exists
a continuous semialgebraic map $\tau \colon [0,1] \to
\cc_{(\tilde{x},\tilde{y})}$ such that $\tau(0)=(\tilde{x},\tilde{y})$ and $\tau(1)=({x},{y})$.
We deduce that $\pi_n \circ \tau$ is also continuous and semialgebraic. 
Since $\pi_n \circ \tau(0)=\tilde{x}$ and $\pi_n \circ \tau(1)={x}$, one gets $x \in
U \cap \cc_{\tilde{x}}$. Then $\ell_c(x) = L_c(x,y) < t = \ell_c(\tilde{x})$ contradicts the
hypothesis that $\tilde{x}$ is a local minimizer of $\ell_c$ on $\cc_{\tilde{x}}$.
\end{proof}
Lemma \ref{lemmaProjCrit} states that the minimizers of $\ell_c$ on $\calD_p \cap \RR^n$ are
obtained as the projection on the first $n$ variables of the critical points of $L_c$ over
the lifted incidence variety $\calV_p \cap \RR^{n+m(m-p)}$. We are now going to prove that
such critical points are generically finite. Let us suppose that $A \in \zarA$ (see Proposition
\ref{smoothness}), and let $c \in \QQ^n$. We also fix a subset $\iota \subset\{1,\ldots,m\}$
of cardinality $\#\iota=m-p$.

We have denoted, in Section \ref{incvar}, by $f \subset \QQ[x,y]$ the polynomial system
defining $\calV_{p,\iota}$, constituted by the entries of $A(x)Y(y)$ and of $Y_\iota-\Id_{m-p}$.
By Proposition \ref{smoothness}, we deduce that $f_{red}$, and hence $f$, generates a radical
ideal and defines a smooth equidimensional algebraic set of co-dimension $m(m-p)+\binom{m-p+1}{2}$.
The set $\crit(L_c,\calV_{p,\iota})$ is hence defined (after the elimination of the Lagrange multipliers)
by the following polynomial system:
\begin{equation}
\label{LagSys}
\lag(\iota):
\qquad
f = 0;
\,\,\,\,\,\,\,
(g,h) =
z'
\left[
\begin{array}{c}
\jac f \\
\jac L_c
\end{array}
\right]
= 0,
\end{equation}
where $z = (z_1,\ldots,z_{(2m-p)(m-p)},1)$ is the vector of Lagrange multipliers: these are the classical
first-order optimality conditions in constrained optimization. In the
previous notation, the vector $g$ (resp. $h$) is of size $n$ (resp. $m(m-p)$). For the sake of brevity,
we say that a point $(x,y,z) \in \zeroset{\lag(\iota)}$ has rank $p$, if $\rank A(x) = p$.

Our next goal in this section is to prove the following Proposition. It states that if the linear function
$\ell_c$ in Problem \eqref{sdpr} is generic, the points $x^* \in \calD_p \cap \RR^n$, such that $\rank A(x^*)=p$,
that correspond to critical points $(x^*,y^*)$ of the restriction of $L_c$ to $\calV_p \cap \RR^{n+m(m-p)}$, are
finitely many.

\begin{proposition}
\label{finiteness}
Let $\zarA \subset \ss_m^{n+1}(\CC)$ be the Zariski open set defined by Proposition \ref{smoothness}, and
let $A \in \zarA \cap \ss_m^{n+1}(\QQ)$. There exists a non-empty Zariski open set $\zarC \subset \CC^n$ such
that, for $c \in \zarC \cap \QQ^n$, for every $p = 0, \ldots, r$, and for every $\iota \subset \{1, \ldots, m\}$
such that $\#\iota = m-p$, the projection of the solutions of the system $\lag(\iota)$ of rank $p$ over the
$x-$space is a finite set.
\end{proposition}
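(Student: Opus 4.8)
The plan is to handle each pair $(p,\iota)$ separately and to reduce the finiteness of the $x$-projection of the rank-$p$ solutions of $\lag(\iota)$ to the inequality ``$\dim\,\leq n$'' for a suitable incidence variety built over the rank-exactly-$p$ locus of $\calV_{p,\iota}$, after which the theorem on the dimension of the fibers of a morphism closes the argument.

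First, fix $p\leq r$ and $\iota$ with $\#\iota=m-p$, and set $d=n-\binom{m-p+1}{2}$. Since $A\in\zarA$, Proposition \ref{smoothness} makes $\calV_{p,\iota}$ smooth and equidimensional of dimension $d$, with $f=f(A,\iota)$ generating a radical ideal, so $T_{(x,y)}\calV_{p,\iota}=\ker\jac f(x,y)$ for every $(x,y)\in\calV_{p,\iota}$. Let $\zarU_\iota=\{(x,y)\in\calV_{p,\iota}\mymid\rank A(x)=p\}$; it is Zariski open in $\calV_{p,\iota}$ because $\rank A(x)\leq p$ holds identically there, so $\dim\zarU_\iota\leq d$. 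Since $\jac L_c=(c,0)$, and since the multiplier normalisation in $\lag(\iota)$ forces $\jac L_c$ into the row span of $\jac f$, which by smoothness of $\calV_{p,\iota}$ is the normal space at the point, the $(x,y)$-part of any rank-$p$ solution of $\lag(\iota)$ is precisely a point of $\zarU_\iota$ at which $(c,0)\perp T_{(x,y)}\calV_{p,\iota}$, equivalently $c\perp\pi_n(T_{(x,y)}\calV_{p,\iota})$ (the linear map $\pi_n$ being its own differential). Introducing
\[
\zarW=\left\{(x,y,c)\in\zarU_\iota\times\CC^n\mymid c\perp\pi_n(T_{(x,y)}\calV_{p,\iota})\right\}
\]
and the projection $\rho\colon\zarW\to\CC^n$, $(x,y,c)\mapsto c$, it suffices to show that $\rho^{-1}(c)$ is finite for $c$ in a nonempty Zariski open subset of $\CC^n$: under $(x,y)\mapsto(x,y,c)$ the $(x,y)$-part of the rank-$p$ solutions of $\lag(\iota)$ is exactly $\rho^{-1}(c)$, and the sought $x$-projection is its further image under $\pi_n$, hence finite whenever $\rho^{-1}(c)$ is.

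The core of the proof is the bound $\dim\zarW\leq n$. The fiber of $\zarW\to\zarU_\iota$ over $(x,y)$ is the annihilator in $\CC^n$ of $\pi_n(T_{(x,y)}\calV_{p,\iota})$, of dimension $n-\dim\pi_n(T_{(x,y)}\calV_{p,\iota})$, and the key point is that $\pi_n$ is injective on $T_{(x,y)}\calV_{p,\iota}$ for $(x,y)\in\zarU_\iota$: if $(\dot x,\dot Y)$ is tangent to $\calV_{p,\iota}$ with $\dot x=0$, differentiating $A(x)Y(y)=0$ gives $A(x)\dot Y=0$ and differentiating $Y_\iota=\Id_{m-p}$ gives $\dot Y_\iota=0$; at a point of $\zarU_\iota$ the kernel of $A(x)$ has dimension exactly $m-p$ and is spanned by the columns of $Y(y)$ (which has rank $m-p$ since $Y_\iota=\Id_{m-p}$), so $\dot Y=Y(y)M$ for some $(m-p)\times(m-p)$ matrix $M$, whence $0=\dot Y_\iota=Y_\iota M=M$ and $\dot Y=0$. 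Thus $\dim\pi_n(T_{(x,y)}\calV_{p,\iota})=d$ all along $\zarU_\iota$, every fiber of $\zarW\to\zarU_\iota$ has dimension $n-d=\binom{m-p+1}{2}$, and therefore $\dim\zarW\leq\dim\zarU_\iota+\binom{m-p+1}{2}\leq n$.

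To finish, decompose $\zarW$ into irreducible components $W_j$ and apply to each $\rho|_{W_j}$ the theorem on the dimension of fibers: it provides a nonempty Zariski open $O_j\subseteq\CC^n$ --- a dense open of $\CC^n$ on which $(\rho|_{W_j})^{-1}(c)$ has dimension $\dim W_j-n\leq 0$ when $\overline{\rho(W_j)}=\CC^n$, or the complement $\CC^n\setminus\overline{\rho(W_j)}$, over which $(\rho|_{W_j})^{-1}(c)$ is empty, otherwise --- on which $(\rho|_{W_j})^{-1}(c)$ is finite. Then $\zarC_{p,\iota}=\bigcap_j O_j$ is nonempty Zariski open and $\rho^{-1}(c)$ is finite for $c\in\zarC_{p,\iota}$; finally $\zarC=\bigcap_{p=0}^{r}\bigcap_{\#\iota=m-p}\zarC_{p,\iota}$ is a finite intersection of nonempty Zariski opens, hence nonempty and Zariski open, and meets $\QQ^n$ since the latter is Zariski dense. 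I expect the one genuinely delicate point to be the injectivity of $\pi_n$ on the tangent spaces along $\zarU_\iota$ used in the dimension bound; it is precisely why one restricts to the rank-exactly-$p$ locus, since off it the fibers of $\zarW\to\zarU_\iota$ can jump and the bound $\dim\zarW\leq n$ may fail. The translation of $\lag(\iota)$ into the row-span condition (which rests on Proposition \ref{smoothness}) and the generic fiber-dimension argument are routine.
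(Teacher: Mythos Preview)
Your proof is correct and takes a genuinely different route from the paper's. The paper localizes near rank-exactly-$p$ points using Schur complements to rewrite $\lag(\iota)$ in local coordinates $(\tilde f,\tilde g,\tilde h)$, then applies Thom's Weak Transversality Theorem with $c$ as parameter to show that for generic $c$ the local Lagrange system cuts out a locally closed set of dimension $e=\binom{m-p}{2}$ in $(x,y,z)$-space; it then observes that over each $x^*$ of rank $p$ the fiber in $(y,z)$ is an affine space of dimension exactly $e$, and concludes via the Theorem on the Dimension of Fibers applied to $(x,y,z)\mapsto x$. You bypass both the Schur-complement localization and the transversality theorem: you build the global incidence $\zarW\subset\zarU_\iota\times\CC^n$, bound $\dim\zarW\leq n$ via the injectivity of $\pi_n$ on tangent spaces along $\zarU_\iota$ (this lemma is exactly the geometric content of the rank-exactly-$p$ restriction, and is the right place to isolate it), and then apply the fiber-dimension theorem to $\rho\colon\zarW\to\CC^n$ rather than to the $x$-projection. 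Your argument is more elementary and self-contained for the bare finiteness claim; the paper's approach, in exchange, yields the exact dimension of the rank-$p$ solution locus in the full $(x,y,z)$-space and makes the role of the Lagrange multipliers more explicit.
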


In order to prove Proposition \ref{finiteness}, we use the local description of determinantal varieties as
developed in \cite[Sec.\,4.1]{HNS2014} and in \cite[Sec.\,5.1]{HNS2015a}. This is briefly recalled below.
Suppose that $x \in \calD_p \cap \RR^n$, with $\rank\,A(x)=p$, and that the upper-left $p \times p$ submatrix
$N$ of $A(x)$ is non-singular (at least one of the $p \times p$ submatrices of $A(x)$ is non-singular).
That is
\begin{equation}
A(x)
=
\left[
\begin{array}{cc}
N & Q \\
P & R
\end{array}
\right]
\end{equation}
and $\det N \neq 0$. Suppose also w.l.o.g. that $\iota = \{1, \ldots, m-p\}$. By \cite[Sec.4.1]{HNS2014} or
\cite[Lemma 13]{HNS2015b}, the local equations of $\calV_{p,\iota}$ over $x$ are given by
\begin{equation}
\label{localEq1}
\left[
\begin{array}{cc}
\Id_p & N^{-1}Q \\
0 & \Sigma(N)
\end{array}
\right]
Y(y) = 0
\qquad
\text{and}
\qquad
Y_\iota-\Id_{m-p}=0,
\end{equation}
where $\Sigma(N) = R-PN^{-1}Q$ is the Schur complement of $A(x)$ at $N$, well defined since $N$ is not singular:
these are elements of the local ring $\QQ[x,y]_{\det\,N}$ at $I=\langle \det\,N \rangle$.
Let $Y^{(1)}$ (resp. $Y^{(2)}$) be the matrix obtained by isolating the first $p$ rows (resp. last $m-p$ rows) from
$Y(y)$. Let $U_\iota$ be such that $U_\iota Y(y) = Y_\iota$, and let $U_\iota = [U^{(1)}_\iota | U^{(2)}_\iota]$ be the
corresponding column subdivision of $U_\iota$. Then \eqref{localEq1} imply
$
\Id_{m-p} = U^{(1)}_\iota Y^{(1)}+U^{(2)}_\iota Y^{(2)} = (U^{(2)}_\iota -U^{(1)}_\iota N^{-1}Q) Y^{(2)}
$
and hence that both $Y^{(2)}$ and $U^{(2)}_\iota-U^{(1)}_\iota N^{-1}Q$ are invertible (in the local ring $\QQ[x]_{\det\,N}$).
We deduce the following equivalent form of the previous equations:
\begin{eqnarray}
\tilde{f}: \,\,\,\,\,\,\,\,\,\,\,\,\,
&& Y^{(1)}+N^{-1}QY^{(2)} = 0, \,\,\,\,\,\,\,\, \Sigma(N) = 0, \,\, \nonumber \\
&& Y^{(2)}-(U^{(2)}_\iota-U^{(1)}_\iota N^{-1}Q)^{-1} = 0,
\end{eqnarray}
denoted by $\tilde{f}$. Up to reordering its entries, the Jacobian matrix of $\tilde{f}$ is
\[
\jac \tilde{f} =
\left[
\begin{array}{cc}
D_x[\Sigma(N)]_{i,j} & 0_{(m-p)^2 \times m(m-p)} \\
\star & 
\begin{array}{cc}
\Id_{p(m-p)} & \star \\
0 & \Id_{(m-p)^2}
\end{array}
\end{array}
\right].
\]
If $A \in \zarA$, by Proposition \ref{smoothness} the rank of $\jac \tilde{f}$ equals $\# f_{red} = m(m-r)+\binom{m-r+1}{2}$
at every $x \in \zeroset{\tilde{f}}$. Similarly, we localize the Lagrange system $\lag(\iota)$ ({\it cf.} \eqref{LagSys})
by defining:
\[
(\tilde{g},\tilde{h}) = z'
\left[
\begin{array}{c}
\jac \tilde{f} \\
\jac L_c
\end{array}
\right].
\]
By the structure of $\jac \tilde{f}$, one gets $\tilde{h}_i=z_{(m-p)^2+i}$, for $i=1, \ldots, m(m-p)$,
and hence one can substitute $z_{(m-p)^2+i}=0, i=1, \ldots, m(m-p)$, in $(\tilde{f},\tilde{g})$.

\begin{proof}[Proof of Proposition \ref{finiteness}]
Let $d=m(m-p)+\binom{m-p+1}{2}$ and $e=\binom{m-p}{2}$ so that $d+e=(2m-p)(m-p)=\# z$.
First, we claim that there exists a non-empty Zariski open set $\zarC_N \subset \CC^n$ such that
if $c \in \zarC_N \cap \QQ^n$ the Jacobian matrix of the local system $(\tilde{f},\tilde{g},\tilde{h})$ has
maximal possible rank. Here $N$ refers to the upper left $p \times p$ submatrix of $A$ as above.
We conclude by defining $\zarC = \cap_N \zarC_N$ (where $N$ runs over the family of $p \times p$ submatrices
of $A$), which is non-empty and Zariski open.

The proof is similar to that of Point 2 of Proposition \ref{smoothness} and hence we only sketch it. Let
\[
  \begin{array}{lrcc}
  \varphi : &  \CC^{n+d+e+m(m-p)} \times \CC^{n} & \longrightarrow & \CC^{n+d+e+m(m-p)} \\
            &  (x,y,z,c) & \longmapsto & (\tilde{f}, \tilde{g}, \tilde{h})(x,y,z,c).
  \end{array}
\]
Then the Jacobian matrix of $(\tilde{f}, \tilde{g}, \tilde{h})$ is $\jac \varphi$
as a polynomial map.
We prove that $0$ is a regular value of $\varphi$, and apply Thom's Weak Transversality
Theorem \cite[Sec.4.2]{din2013nearly} as in the proof of Proposition \ref{smoothness}.
Let $(x,y,z,c) \in \varphi^{-1}(0)$ (if it does not exist, define $\zarC_N=\CC^n$).
Since polynomials in $\tilde{f}$ only depend on $x$ and $y$, then $\jac \tilde{f}$ is a
submatrix of $\jac \varphi$ and the columns corresponding to the derivatives of $\tilde{f}$
with respect to $z,c$ are zero. Hence the rank of $\jac \varphi$ is at most $n+d+m(m-r)$
since $\jac \tilde{f}$ has $e$ rank defects by Proposition \ref{smoothness} (recall that
$A \in \zarA$). A
full-rank submatrix of $\jac\varphi$ at $(x,y,z,c)$ is then given in this case by the
derivatives with respect to: (1) $x,y$, (2) $c_1, \ldots, c_n$, and (3) $z_{(m-p)^2+i}, i=1,
\ldots, m(m-p)$. 

Now, we can conclude the proof. Let $c \in \zarC = \cap_N \zarC_N$ (previously defined).
From the previous claim, we deduce that the locally closed set ${\cal E} = \zeroset{\lag(\iota)}
\cap \{(x,y,z) \mymid \rank\,A(x)=p\}$ is empty or equidimensional of dimension $e$.
Let
\[
\begin{array}{lrcc}
  \pi : &  \CC^{n+m(m-p)+d+e} & \longrightarrow & \CC^{n} \\
            &  (x,y,z) & \longmapsto & x
\end{array}
\]
be the projection over the $x-$space, and $x^* \in \pi(\cal E)$.
In particular $\rank A(x^*) = p$, and there is a unique $y^* \in \CC^{m(m-p)}$
such that $f(x^*,y^*) = 0$. We deduce that $\pi^{-1}(x^*)$ is isomorphic to the linear space defined by
\[
\Big\{(z_1, \ldots, z_{d+e}) \mymid (z_1, \ldots, z_{d+e}) \jac f = (c', 0)\Big\}.
\]
Since the rank of $\jac f$ is $d$, $\pi^{-1}(x^*)$ is a linear space of dimension
$e$, and by the Theorem on the Dimension of Fibers \cite[Sect. 6.3, Theorem 7]{Shafarevich77}
$\pi_x(\mathcal{E})$ has dimension $0$.
\end{proof}

\section{From se\-mi-\-al\-geb\-raic to al\-geb\-raic op\-ti\-mi\-za\-tion}
\label{sec-fromsemialg}
In order to prove that our algorithm is correct, we present in this section the main geometric
result of this work.
By the independent interest of the results of this section, we need to introduce, first, some
notation.

Given $c \in \QQ^n$ and $A \in \ss_m^{n+1}(\QQ)$, for $0 \leq r \leq m$, we have denoted by $\calF_r(A,c)$
the (possibly empty or infinite) set of minimizers of $\ell_c$ on $\spec \cap \calD_r$. By simplicity,
we also call $\calF_r(A,c)$ the set of minimizers of $\sdp_r$. When $r=m$, $\calF_m(A,c)$ is
the convex optimal face of the spectrahedron $\spec$ in direction $c$. Indeed, since every face of
a spectrahedron is exposed, it is exactly defined as the set of minimizers of some semidefinite program
$\sdp_m$. We denote by
\[
\calR_r(A,c) = \bigg\{p \mymid 0 \leq p \leq r, \, \exists\,x\in\calF_r(A,c), \, \rank\,A(x)=p \bigg\}
\]
the rank profile of $\calF_r(A,c)$, namely the set of ranks of matrices in $\calF_r(A,c)$. Clearly,
$\calF_r(A,c) \neq \emptyset$ if and only if $\calR_r(A,c) \neq \emptyset$. This is our main theorem in
this section.

\begin{theorem}
\label{reduction}
Suppose that $\calF_r(A,c) \neq \emptyset$, and let $p \in \calR_r(A,c)$. For $x^* \in \calF_r(A,c)$
such that $\rank\,A(x^*)=p$, then $x^*$ is a local minimizer of $\ell_c$ on $\calD_{p} \cap \RR^n$.
\end{theorem}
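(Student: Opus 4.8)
The plan is to argue by contradiction: suppose $x^*\in\calF_r(A,c)$ has $\rank A(x^*)=p$ but $x^*$ is not a local minimizer of $\ell_c$ on $\calD_p\cap\RR^n$. Then every Euclidean neighbourhood of $x^*$ contains a point $x'\in\calD_p\cap\RR^n$ with $\ell_c(x')<\ell_c(x^*)$. The key point to exploit is that $x^*$ \emph{is} a minimizer of $\ell_c$ over the semialgebraic set $\spec\cap\calD_r$, so any such nearby $x'$ must fail to lie in $\spec\cap\calD_r$; since $x'\in\calD_p\subseteq\calD_r$, it must be that $x'\notin\spec$, i.e.\ $A(x')$ has a negative eigenvalue. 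The strategy is to show this is impossible by producing, from $x'$, a point of $\calD_p$ that \emph{does} lie in $\spec$ and still beats $x^*$ in objective value, contradicting optimality over $\spec\cap\calD_r$.

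First I would set up the local picture at $x^*$. Since $A(x^*)\succeq 0$ has rank exactly $p$, it has $p$ positive eigenvalues and $m-p$ zero eigenvalues; choose an orthogonal $P\in\mathrm{GL}_m$ with $P^TA(x^*)P=\mathrm{diag}(\Lambda,0)$, $\Lambda\succ0$ of size $p$. By continuity, for $x$ in a small neighbourhood of $x^*$ the matrix $P^TA(x)P$ has a block form $\left[\begin{smallmatrix}N(x)&Q(x)\\Q(x)^T&R(x)\end{smallmatrix}\right]$ with $N(x)$ close to $\Lambda$, hence $N(x)\succ 0$. On this neighbourhood the Schur complement $\Sigma(x)=R(x)-Q(x)^TN(x)^{-1}Q(x)$ is well defined, and $A(x)\succeq0\iff\Sigma(x)\succeq0$, while $\rank A(x)\le p\iff \Sigma(x)=0$ (this is exactly the local description recalled before Proposition~\ref{finiteness}). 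Thus on this neighbourhood $\calD_p\cap\RR^n$ is cut out by the $\binom{m-p+1}{2}$ equations $\Sigma(x)=0$, and on $\calD_p$ we automatically have $\Sigma(x)=0\succeq0$, i.e.\ $A(x)\succeq0$. The conclusion I want is immediate from this: for $x$ in a small enough Euclidean ball $U$ around $x^*$, $x\in\calD_p\cap\RR^n$ already forces $x\in\spec$, hence $U\cap(\calD_p\cap\RR^n)\subseteq\spec\cap\calD_r$.

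Given that, the contradiction is clean. If $x^*$ were not a local minimizer of $\ell_c$ on $\calD_p\cap\RR^n$, there is $x'\in U\cap(\calD_p\cap\RR^n)$ with $\ell_c(x')<\ell_c(x^*)$; but then $x'\in\spec\cap\calD_r$ by the previous paragraph, contradicting the fact that $x^*$ minimizes $\ell_c$ over $\spec\cap\calD_r$. Hence $x^*$ is a local minimizer of $\ell_c$ on $\calD_p\cap\RR^n$, as claimed.

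The main obstacle, and the only step requiring care, is the local equivalence ``$x$ near $x^*$, $x\in\calD_p$ $\Rightarrow$ $A(x)\succeq0$''. It rests on two facts that must be stated precisely: (i) $N(x)\succ0$ throughout a neighbourhood of $x^*$, which follows from continuity of eigenvalues since $N(x^*)=\Lambda\succ0$; and (ii) the Schur-complement identities, namely that for $N\succ0$, $A=\left[\begin{smallmatrix}N&Q\\Q^T&R\end{smallmatrix}\right]\succeq0\iff R-Q^TN^{-1}Q\succeq0$ and $\rank A=\rank N+\rank(R-Q^TN^{-1}Q)$, so $\rank A\le p=\rank N$ forces the Schur complement to vanish. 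Both are standard, but I would cite them explicitly (and note that the reordering of rows/columns realizing $N$ as the upper-left block is harmless because at least one $p\times p$ principal submatrix of $A(x^*)$ is nonsingular — in fact here we may take $N(x^*)=\Lambda$ after the orthogonal change of basis). Everything else is a one-line topological argument.
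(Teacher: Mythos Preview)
Your proof is correct and in fact more streamlined than the paper's. The paper argues by a three-case split on the connected component $\cc^*$ of $x^*$ in $\calD_p\cap\RR^n$: either $\cc^*\subset\spec$, or some open $U\ni x^*$ meets $\calD_{m-1}$ only in positive semidefinite matrices, or neither; the third case is then ruled out by an eigenvalue-continuity argument (a sequence $x(d)\to x^*$ in $\calD_p\setminus\spec$ would force $e_{m-p+1}(x(d))\le 0$, hence $e_{m-p+1}(x^*)\le 0$, so $\rank A(x^*)\le p-1$, contradiction). Your Schur-complement computation establishes directly that a small enough neighbourhood $U$ of $x^*$ satisfies $U\cap(\calD_p\cap\RR^n)\subset\spec$, which is precisely the local fact underlying the paper's first two cases and what makes the third impossible; you thereby collapse the case analysis into a single clean step. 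What your route buys is a transparent structural reason for the phenomenon (once $N(x)\succ0$, the local equations $\Sigma(x)=0$ of $\calD_p$ already certify $A(x)\succeq0$); the paper's route trades this for an argument using only ordered eigenvalues, avoiding any appeal to block decompositions. One minor remark: your opening paragraph sets up a proof by contradiction, but the body is a direct argument --- you never actually use the hypothetical $x'$ beyond motivation, so you could drop that framing.
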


\begin{proof}
Suppose that $x^*$ is as in the hypothesis. We denote by $\cc^* \subset \calD_p \cap \RR^n$
the connected component of $\calD_p \cap \RR^n$ containing $x^*$. Hence there are three possible
(non mutually exclusive) cases, that we analyze below. Recall that $p \leq r$, hence
$\calD_p \subset \calD_r$.

{\it First case}: $\cc^* \subset \spec$. Hence $\cc^* \subset \spec \cap \calD_p \subset \spec
\cap \calD_r$. Since $\spec \cap \calD_r$ is the feasible set of $\sdp_r$ and $x^*$ is a minimizer
of $\sdp_r$, hence $x^*$ is a minimizer of $\ell_c$ on $\cc^*$. Hence it is a local minimizer of
$\ell_c$ on $\calD_p \cap \RR^n$, as claimed.

{\it Second case}: There exists an open set $U \subset \RR^n$ such that $x^* \in U$ and
$U \cap (\calD_{m-1} \setminus \spec) = \emptyset$. This means that $U$ intersects
$\calD_{m-1} \cap \RR^n$ only at positive semidefinite matrices, and $U \cap \spec$ is an
open subset of  $\spec$ containing $x^*$. We deduce that $x^*$ is a 
minimizer of $\ell_c$ on $U \cap \calD_{p} \subset U \cap \spec$, hence a local minimizer of
$\ell_c$ on $\calD_p \cap \RR^n$.

{\it Third case}: $\cc^* \not\subset \spec$, and for all $U \subset \RR^n$ open set, such that
$x^* \in U$, then $U \cap (\calD_{m-1} \setminus \spec) \neq \emptyset$. We prove below
that such a situation cannot occur. Indeed, one first deduces that, for all $U$ as above,
$U \cap (\calD_{p} \setminus \spec) \neq \emptyset$ since $\cc^* \not\subset \spec$.
For a positive integer $d \in \NN$, we denote by $B(x^*,1/d)$ the open ball with center
$x^*$ and radius $1/d$, that is $B(x^*,1/d)=\{x \in \RR^n \mymid \|x-x^*\|<1/d\}$, where
$\|x\|$ is the Euclidean norm of $x$. By hypothesis, for all $d \in \NN$ there exists
$x(d) \in B(x^*,1/d) \cap \calD_p$ such that $A(x(d)) \not \succeq 0$. Hence $x(d)
\rightarrow x^*$ when $d \rightarrow \infty$. Denoting by $e_1(x) \leq e_2(x) \leq \cdots
\leq e_m(x)$ the ordered eigenvalues of $A(x)$, one deduces that, for all $d \in \NN$,
$e_1(x(d))<0$ and hence $e_{m-p+1}(x(d)) \leq 0$ (since the matrix $A(x(d))$ has at least
$m-p$ null eigenvalues). In particular $e_{m-p+1}(x(d)) \rightarrow e_{m-p+1}(x^*) \leq 0$
when $d \rightarrow \infty$. Since $x^* \in \spec$, then $e_1(x^*)=\cdots=e_{m-p}(x^*)=
e_{m-p+1}(x^*)=0$, and the rank of $A(x^*)$ is at most $p-1$, which contradicts the
hypotheses.
\end{proof}

We prove two corollaries of Theorem \ref{reduction} and of previous results, which are
worth to be made explicit and highlighted.

\begin{corollary}
Let $x^* \in \calF_r(A,c)$ satisfy the following property: for all Euclidean open sets
$U \subset \RR^n$ containing $x^*$, $U$ contains a singular matrix with a negative eigenvalue. Then,
if $p = \rank\,A(x^*)$, the connected component $\cc^* \subset \calD_p \cap \RR^n$ containing $x^*$
is contained in $\spec$.
\end{corollary}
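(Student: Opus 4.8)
The plan is to derive this corollary directly from the trichotomy established in the proof of Theorem~\ref{reduction}, applied to the minimizer $x^*$ with $p=\rank\,A(x^*)$. Let $\cc^*\subset\calD_p\cap\RR^n$ be the connected component containing $x^*$. The hypothesis on $x^*$ is precisely that for every Euclidean open set $U\ni x^*$ one has $U\cap(\calD_{m-1}\setminus\spec)\neq\emptyset$: indeed, a singular matrix $A(x)$ with a negative eigenvalue lies in $\calD_{m-1}$ (it is singular, so $\rank\,A(x)\le m-1$) and not in $\spec$ (it has a negative eigenvalue). So $x^*$ satisfies the standing hypothesis of the \emph{third case} in the proof of Theorem~\ref{reduction}.

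The key step is then to invoke the dichotomy ``\emph{first case} vs.\ \emph{third case}'' from that proof. In the third case it was shown that the configuration is impossible \emph{if} $\cc^*\not\subset\spec$: the argument passes to a sequence $x(d)\to x^*$ with $x(d)\in\calD_p$, $A(x(d))\not\succeq0$, uses that $A(x(d))$ has at least $m-p$ zero eigenvalues to force $e_{m-p+1}(x(d))\le 0$, takes the limit to get $e_{m-p+1}(x^*)\le 0$, and combines this with $x^*\in\spec$ (so all $e_i(x^*)\ge0$) to conclude $e_1(x^*)=\cdots=e_{m-p+1}(x^*)=0$, whence $\rank\,A(x^*)\le p-1$, contradicting $\rank\,A(x^*)=p$. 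Therefore, under our hypothesis, the only consistent possibility is $\cc^*\subset\spec$, which is exactly the conclusion. (Equivalently: the negation $\cc^*\not\subset\spec$ together with the hypothesis places us squarely in the impossible third case, so the negation fails.)

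Concretely, I would write: assume for contradiction that $\cc^*\not\subset\spec$; fix any open $U\ni x^*$; by hypothesis $U\cap(\calD_{m-1}\setminus\spec)\neq\emptyset$, and since $\cc^*\not\subset\spec$ and $x^*\in\cc^*$ we also get $U\cap(\calD_p\setminus\spec)\neq\emptyset$; now run verbatim the ball argument $B(x^*,1/d)$ of the third case to produce $x(d)\to x^*$ with $x(d)\in\calD_p$ and $e_1(x(d))<0$, hence $e_{m-p+1}(x(d))\le0$; passing to the limit and using $x^*\in\spec$ yields $\rank\,A(x^*)\le p-1$, contradiction. Hence $\cc^*\subset\spec$.

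I do not expect a genuine obstacle here: the corollary is essentially a repackaging of the third-case analysis of Theorem~\ref{reduction}, isolating the one-line observation that ``singular with a negative eigenvalue'' $\iff$ ``in $\calD_{m-1}\setminus\spec$''. The only point requiring a little care is making sure the eigenvalue-continuity and eigenvalue-ordering facts ($e_{m-p+1}(x(d))\le 0$ because at least $m-p$ eigenvalues vanish along $\calD_p$, and continuity of the ordered eigenvalues under $x(d)\to x^*$) are cited exactly as in the theorem's proof rather than re-derived.
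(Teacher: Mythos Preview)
Your proposal is correct and follows essentially the same approach as the paper: the paper's proof simply states that one applies \emph{mutatis mutandis} the argument of the Third case in the proof of Theorem~\ref{reduction}, dropping the assumption $\cc^*\not\subset\spec$ and concluding that necessarily $\cc^*\subset\spec$. Your write-up is a faithful (and more explicit) unpacking of exactly that, including the identification of the hypothesis with $U\cap(\calD_{m-1}\setminus\spec)\neq\emptyset$ and the contradiction via the eigenvalue-continuity argument.
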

\begin{proof}
We apply {\it mutatis mutandis} the argument of the Third case in the proof of Theorem
\ref{reduction}, without the hypothesis that $\cc^* \not\subset \spec$. Hence we conclude
that necessarily $\cc^* \subset \spec$.
\end{proof}
%

The second corollary gives a finiteness theorem for the set of solutions of a generic
rank constrained semidefinite program \eqref{sdpr}.

\begin{corollary}
\label{finitesolsdpr}
Let $\zarA \subset \ss_m^{n+1}(\CC)$ and $\zarC \subset \CC^n$ be the Zariski open sets defined respectively in
Proposition \ref{smoothness} and \ref{finiteness}. If $A \in \zarA \cap \ss_m^{n+1}(\QQ)$ and $c \in \zarC \cap \QQ^n$, the
set $\calF_r(A,c)$ of minimizers of the rank-constrained semidefinite program $\sdp_r$ is finite.
\end{corollary}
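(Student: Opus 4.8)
The plan is to combine the three structural results established above — Theorem \ref{reduction}, Lemma \ref{lemmaProjCrit} and Proposition \ref{finiteness} — by stratifying the minimizer set according to the rank of the associated matrix. Concretely, I would write
\[
\calF_r(A,c) = \bigcup_{p \in \calR_r(A,c)} \calF_{r,p}, \qquad \calF_{r,p} = \{x^* \in \calF_r(A,c) \mymid \rank\,A(x^*)=p\}.
\]
Since $\calR_r(A,c) \subset \{0,1,\ldots,r\}$ is finite (and empty exactly when $\calF_r(A,c)$ is), it suffices to prove that each stratum $\calF_{r,p}$ is finite.

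Fix $p \in \calR_r(A,c)$ and take $x^* \in \calF_{r,p}$. First I would invoke Theorem \ref{reduction} to conclude that $x^*$ is a local minimizer of $\ell_c$ on $\calD_p \cap \RR^n$. Because $A \in \zarA$, Lemma \ref{lemmaProjCrit} then applies and provides a subset $\iota \subset \{1,\ldots,m\}$ with $\#\iota = m-p$ and a point $(x^*,y^*) \in \crit(L_c,\calV_{p,\iota})$ with $\pi_n(x^*,y^*)=x^*$. Still using $A \in \zarA$, Proposition \ref{smoothness} guarantees that $\calV_{p,\iota}$ is smooth, equidimensional of the expected co-dimension, and cut out by a radical ideal; hence criticality of $(x^*,y^*)$ is equivalent to the existence of a vector $z^*$ of Lagrange multipliers with $(x^*,y^*,z^*) \in \zeroset{\lag(\iota)}$, the normalization of the last coordinate of $z$ to $1$ in \eqref{LagSys} being legitimate precisely because $\jac f$ has full rank at $(x^*,y^*)$. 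Crucially, $\rank\,A(x^*)=p$, so this is a solution of $\lag(\iota)$ \emph{of rank $p$} in the terminology of Section \ref{ssec:critpoints}.

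Now I would appeal to Proposition \ref{finiteness}: since $c \in \zarC$, for this $p$ and $\iota$ the projection over the $x$-space of the rank-$p$ solutions of $\lag(\iota)$ is finite. Taking the union over the $\binom{m}{m-p}$ subsets $\iota$ with $\#\iota = m-p$ shows that $\calF_{r,p}$ is contained in a finite set, hence is finite; summing over $p \in \calR_r(A,c)$ concludes the proof.

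The only real care needed — and the place where a slip is easy — is the rank bookkeeping. The set $\crit(L_c,\calV_{p,\iota})$ may well contain points whose matrix has rank strictly below $p$ (the incidence variety parametrizes rank $\leq p$), and Proposition \ref{finiteness} controls only the rank-exactly-$p$ locus; stratifying $\calF_r(A,c)$ by rank is exactly what confines the discussion to the rank-$p$ stratum, where the lifted Lagrange solution has rank precisely $p$ and Proposition \ref{finiteness} is applicable. One should also check the mild compatibility that the projection $\pi_n$ appearing in Lemma \ref{lemmaProjCrit} agrees with the "projection over the $x$-space" of Proposition \ref{finiteness} once the Lagrange multipliers are appended, which is immediate.
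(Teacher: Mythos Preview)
Your proof is correct and follows essentially the same approach as the paper: stratify $\calF_r(A,c)$ by rank, apply Theorem \ref{reduction} and Lemma \ref{lemmaProjCrit} to each stratum, and conclude finiteness via Proposition \ref{finiteness}. Your write-up is in fact more careful than the paper's, which leaves the rank-bookkeeping and the passage through the Lagrange system implicit.
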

\begin{proof}
Remark that $\calF_r(A,c)$ is the union of sets $B_p \subset \calF_r(A,c)$, for $p \in \calR_r(A,c)$,
corresponding to minimizers of rank $p$, that is $\calF_r(A,c)=\cup_{p \in \calR_r(A,c)} B_p$. We prove
that $B_p$ is finite for all $p \in \calR_r(A,c)$.

Let $x^* \in B_p$. By Theorem \ref{reduction}, $x^*$ is a local minimizer of $\ell_c$ on $\calD_p \cap \RR^n$.
Since $A \in \zarA$, by Lemma \ref{lemmaProjCrit} $B_p$ is included in the union of the projections of
the sets of critical points of $L_c$ on $\calV_{p,\iota}$, for $\iota \subset \{1, \ldots, m\}, \#\iota=m-p$.
Since $c \in \zarC$, and since $\rank\,A(x^*)=p$, by Proposition \ref{finiteness} $B_p$ is the projection
of a finite set, hence finite.
\end{proof}

\section{The algorithm}
\label{algorithm}
The main algorithm described in this work is called {\sc SolveSDP}.
\subsection{Description}
\label{formalDescr}
We first describe the main subroutines of {\sc SolveSDP}.

\smallskip
\noindent \,\, {\sf CheckReg}. With input $A \in \ss_{m}^{n+1}(\QQ)$ and $p \leq r$, it returns {\tt true}
  if for all $\iota \subset \{1, \ldots, m\}$, with $\#\iota=m-p$, the set $\calV_{p,\iota}$
  is smooth and equidimensional; otherwise, it returns {\tt false}.

\smallskip
\noindent \,\, {\sf Optimize}. With input $A,c$ and $p$, it returns the vector of ideals $(\lag(\iota_1),
\ldots,\lag(\iota_{\binom{m}{p}})) \subset \QQ[x,y,z]$, where $\iota_j \subset \{1,\ldots,m\}$, with $\#\iota_j=m-p$,
$j=1, \ldots, \binom{m}{p}$. The set $\cup_j\zeroset{\lag(\iota_j)}$ encodes the union of the critical
points of $L_c$ restricted to the components $\calV_{p,\iota}$ of $\calV_p$.

\smallskip
\noindent \,\, {\sf Project}. With input the output of {\sf Optimize}, it substitutes each ideal $\lag(\iota_j)$
with the elimination ideal $I_{\iota_j}=\lag(\iota_j) \cap \QQ[x]$, for $j=1, \ldots, \binom{m}{p}$, returning
$I=(I_{\iota_j}, i=1, \ldots, \binom{m}{p})$.

We recall the definition of rational parametrization of a finite set $S \subset \RR^n$: this is given by a vector
$Q=(q,q_0,q_1, \ldots, q_n) \subset \QQ[t]$ such that $S$ admits a representation \eqref{ratrep}.
We need to define two routines performing operations on rational parametrizations of finite sets.

\noindent \,\, {\sf RatPar}. Given a zero-dimensional ideal $I_{\iota_j} \subset \QQ[x]$, it returns a rational
parametrization $Q = (q,q_0,q_1, \ldots, q_n) \subset \QQ[t]$ of $I_{\iota_j}$. If $I_{\iota_j}$ is not zero-dimensional,
it returns an error message.

\smallskip
\noindent \,\, {\sf Union}. Given rational parametrizations $Q_1,Q_2 \subset \QQ[t]$ encoding two finite sets
$V_1,V_2 \subset \CC^n$, it returns a rational parametrization $Q \subset \QQ[t]$ encoding $V_1 \cup V_2$.

\smallskip
The following is the formal procedure of {\sc SolveSDP}. We offer below a more explicit description of the
algorithm for the sake of clarity.

\begin{algorithm}
\caption{\bf SolveSDP}
\label{solvesdp}
\begin{algorithmic}[1]
\Procedure{SolveSDP}{$A,c,r$}
\State $Q \leftarrow [\,\,]$
\For{$p=0,\ldots,r$}
\If {${\sf CheckReg}(A,p)=$ {\tt false}} \Return{error}
\EndIf
\State $I \leftarrow {\sf Project}({\sf Optimize}(A,c,p))$
\For{$j=1, \ldots, \binom{m}{p}$}
\State $Q_{\iota_j} \leftarrow {\sf RatPar}(I_{\iota_j})$
\State $Q \leftarrow {\sf Union}(Q,Q_{\iota_j})$
\EndFor
\EndFor
\State \Return{$Q$}
\EndProcedure
\end{algorithmic}
\end{algorithm}

The input is a triple $(A,c,r)$, where $A \in \ss_m^{n+1}(\QQ)$ is $(n+1)-$tuple of symmetric matrices with
rational coefficients, $c \in \QQ^n$ defines the linear function $\ell_c$ in \eqref{sdpr} and $r$ is
the maximum admissible rank. For every value of $p$ from $0$ to $r$, the algorithm checks whether
the regularity assumption on the incidence varieties $\calV_{p,\iota}, \iota\subset\{1, \ldots, m\},$ for
$\#\iota=m-p$, holds. If this is the case, it computes rational pa\-ra\-me\-tri\-za\-tions $Q_{\iota}$ of the Lagrange systems
encoding the critical points
of the map $L_c$, on the components $\calV_{p,\iota}$ of the incidence variety $\calV_p$. The output
is a rational parametrization $Q$ encoding the union of the finite sets defined by the $Q_{\iota}'$s.

\subsection{Correctness}

We prove in this section that {\sc SolveSDP} is correct. Our proof relies on intermediate results already
stated and proved in the previous sections.

\begin{theorem}
\label{correctness}
Let $\zarA \subset \ss_m^{n+1}(\CC)$ and $\zarC \subset \CC^n$ be the Zariski open sets defined respectively
by Proposition \ref{smoothness} and \ref{finiteness}. Let $A \in \zarA \cap \ss_m^{n+1}(\QQ)$, $c \in \zarC
\cap \QQ^n$ and $0 \leq r \leq m$. Then the output of {\sc SolveSDP} is a rational parametrization of a
finite set containing all minimizers of $\sdp_r$.
\end{theorem}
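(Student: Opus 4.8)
The plan is to verify that the output rational parametrization $Q$ of \textsc{SolveSDP} encodes a finite set and that this set contains $\calF_r(A,c)$, the minimizer set of $\sdp_r$. We may assume $A \in \zarA$ and $c \in \zarC$, so Proposition~\ref{smoothness}, Lemma~\ref{lemmaProjCrit}, Proposition~\ref{finiteness} and Theorem~\ref{reduction} all apply. First I would observe that, since $A \in \zarA$, for every $p=0,\ldots,r$ and every $\iota$ with $\#\iota=m-p$, the incidence variety $\calV_{p,\iota}$ is smooth and equidimensional, so the call ${\sf CheckReg}(A,p)$ returns {\tt true} and the algorithm never exits with \texttt{error}; hence it runs to completion and returns a parametrization $Q$. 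It therefore suffices to analyze what $Q$ encodes.

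Next I would unwind the subroutines along the lines of Section~\ref{formalDescr}. For each $p$, ${\sf Optimize}(A,c,p)$ produces the Lagrange ideals $\lag(\iota_j)$, whose zero sets are exactly $\crit(L_c,\calV_{p,\iota_j})$ by \eqref{LagSys}; then ${\sf Project}$ replaces each by its elimination ideal $I_{\iota_j} = \lag(\iota_j)\cap\QQ[x]$, whose zero set is the Zariski closure of the projection $\pi_n(\crit(L_c,\calV_{p,\iota_j}))$. The key point is that this projection need not be finite in general, but Proposition~\ref{finiteness} guarantees that the part of it corresponding to points of rank exactly $p$ is finite; since the rank-$p$ locus is Zariski open in $\calD_p$, and since on the complementary lower-rank locus the relevant critical points already show up at an earlier stage $p'<p$ of the loop, the union over $p$ of these projected sets contains a finite subset capturing all the minimizers. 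More carefully, I would argue that $I_{\iota_j}$ is zero-dimensional on the relevant component or, if not literally zero-dimensional as written, that the decomposition of $\calF_r(A,c)$ by rank from Corollary~\ref{finitesolsdpr} ($\calF_r(A,c)=\bigcup_{p\in\calR_r(A,c)} B_p$ with each $B_p$ finite) lets us restrict attention to a finite piece; this is where I expect to need a remark that ${\sf RatPar}$ is only ever invoked on the zero-dimensional situation, or else the statement of the theorem should be read as asserting finiteness of the output which is guaranteed by Corollary~\ref{finitesolsdpr}.

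The containment $\calF_r(A,c)\subseteq S$ is then assembled as follows. Take $x^*\in\calF_r(A,c)$ and let $p=\rank A(x^*)\in\calR_r(A,c)$. By Theorem~\ref{reduction}, $x^*$ is a local minimizer of $\ell_c$ on $\calD_p\cap\RR^n$. By Lemma~\ref{lemmaProjCrit} (applicable since $A\in\zarA$), $x^*=\pi_n(x^*,y^*)$ for some $(x^*,y^*)\in\crit(L_c,\calV_{p,\iota})$ with $\#\iota=m-p$, hence $(x^*,y^*,z^*)\in\zeroset{\lag(\iota)}$ for suitable multipliers $z^*$, so $x^*\in\zeroset{I_\iota}$. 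Since $\rank A(x^*)=p$, Proposition~\ref{finiteness} ensures $x^*$ lies in the finite, zero-dimensional part on which ${\sf RatPar}$ succeeds, so $x^*$ appears in $Q_\iota$, computed during the $p$-th pass of the outer loop; and ${\sf Union}$ merges all $Q_\iota$ into $Q$ without loss. Hence $x^*\in S$, proving $\calF_r(A,c)\subseteq S$, while finiteness of $S$ follows from the fact that each $Q_\iota$ encodes a finite set (Proposition~\ref{finiteness}) and a finite union of finite sets is finite.

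The main obstacle I anticipate is the bookkeeping around the rank stratification: the elimination ideal $I_{\iota_j}$ may a priori define a positive-dimensional set because of contributions from matrices of rank strictly less than $p$, so I must argue either that ${\sf RatPar}$ is only meaningfully applied to the zero-dimensional rank-$p$ component, or that those lower-rank critical points are already recovered (as minimizers) at the earlier iteration $p'=\rank<p$. Making this precise — essentially checking that the decomposition $\calF_r(A,c)=\bigcup_p B_p$ is compatible with the loop structure and that each $B_p$ is extracted by the $p$-th pass — is the one delicate point; everything else is a direct chaining of the already-established Proposition~\ref{smoothness}, Lemma~\ref{lemmaProjCrit}, Proposition~\ref{finiteness}, Theorem~\ref{reduction} and Corollary~\ref{finitesolsdpr}.
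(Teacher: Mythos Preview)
Your proposal is correct and follows essentially the same approach as the paper: take a minimizer $x^*$ of rank $p$, apply Theorem~\ref{reduction} to see it is a local minimizer on $\calD_p\cap\RR^n$, then Lemma~\ref{lemmaProjCrit} to lift it to a critical point $(x^*,y^*)\in\crit(L_c,\calV_{p,\iota})$, hence a solution of $\lag(\iota)$, and finally invoke Proposition~\ref{finiteness} (using $\rank A(x^*)=p$) for finiteness so that {\sf RatPar} captures $x^*$ in $Q_\iota$. The bookkeeping worry you flag---that $I_{\iota_j}$ might fail to be zero-dimensional because of lower-rank contributions---is a genuine subtlety, but the paper's own proof does not address it either and simply asserts that {\sf RatPar} succeeds; so your argument is at least as complete as the paper's, and your instinct to track the rank stratification through the loop is sound.
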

\begin{proof}
Let $(A,c,r)$ be the input of {\sc SolveSDP}, and let $x^* \in \RR^n$ be a solution of $\sdp_r$.
Let $p=\rank A(x^*)$. By Theorem \ref{reduction}, $x^*$ is a local minimizer of $\ell_c$ on
$\calD_p \cap \RR^n$. Let us denote by $S$ the image of the union of sets $\crit(L_c,\calV_{p,\iota}),
\iota \subset \{1, \ldots, m\}, \#\iota=m-p$ under the projection $\pi_n(x,y)=x$, namely
\[
S=\pi_n\left(\bigcup_{\#\iota=m-p} \crit(L_c,\calV_{p,\iota}) \right).
\]  
Lemma \ref{lemmaProjCrit} implies that $x^* \in S$. Since $A \in \zarA$, by Proposition \ref{smoothness}
$\calV_{p,\iota}$ is smooth and equidimensional of dimension $m(m-p)+\binom{m-p+1}{2}$. Hence, for all
$\iota \subset \{1, \ldots, m\}$, with $\#\iota=m-p$, the set $\crit(L_c,\calV_{p,\iota} \cap \RR^{n+m(m-p)})$
is defined by the Lagrange system $\lag(\iota)$ introduced in \eqref{LagSys}. We conclude that there
exists $\iota$ as above, and $y^* \in \CC^{n+m(m-p)}$ and $z^* \in \CC^{(2m-p)(m-p)+1}$ such that $(x^*,y^*,z^*)$
is a solution of $\lag(\iota)$ of rank $p$ (indeed, by hypothesis $\rank A(x^*)=p$). By Proposition
\ref{finiteness}, the solutions of rank $p$ of $\lag(\iota)$ are finitely many.

Hence, respectively, the subroutines {\sf Optimize, Project} and {\sf RatPar} compute a rational
pa\-ra\-me\-tri\-za\-tion $Q_{\iota} = (q^{(\iota)},q_0^{(\iota)},\ldots,q_n^{(\iota)}) \subset \QQ[t]$ such that there
exists $t^* \in \RR$ such that
\[
x^*=(q_1^{(\iota)}(t^*)/q_0^{(\iota)}(t^*),\ldots,q_n^{(\iota)}(t^*)/q_0^{(\iota)}(t^*)).
\]
Then the output $Q$ is a rational parametrization containing $x^*$. By the genericity of $x^*$ among the
solutions of $\sdp_r$, we conclude.
\end{proof}

\section{Complexity analysis}
\label{complexityanalysissection}

\subsection{Degree bounds for the output representation}
\label{degboundsoutputrep}

The output of {\sc SolveSDP} is a rational univariate parametrization $Q = (q,q_0,q_1,\ldots,q_n) \subset \QQ[t]$.
For practical purposes, often it is useful to compute an approximation of the coordinates of the minimizers of
Problem \eqref{sdpr}. This can be done by performing real root isolation on the univariate polynomial $q$. Hence
we are interested in bounding the degree of $q$, which is done by the following Proposition.

\begin{proposition}
\label{degbound}
Let $Q = (q, q_0, q_1, \ldots, q_n) \subset \QQ[t]$ be the rational parametrization returned by {\sc SolveSDP}.
Then
\[
\deg\,q \leq \sum_{p=0}^r \binom{m}{p} \theta(m,n,p),
\]
where
\[
\theta(m,n,p) = \sum_{k} \binom{c_p}{n-k} \binom{n-1}{k+c_p-1-p(m-p)} \binom{p(m-p)}{k},
\]
with $c_p=(m-p)(m+p+1)/2$.
\end{proposition}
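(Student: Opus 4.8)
The plan is to bound $\deg q$ by summing, over each rank stratum $p = 0, \ldots, r$ and each index set $\iota$ with $\#\iota = m-p$, the number of complex solutions of the Lagrange system $\lag(\iota)$ of rank $p$, since by Theorem~\ref{correctness} (via Lemma~\ref{lemmaProjCrit} and Proposition~\ref{finiteness}) the output set $S$ is contained in the union of the projections $\pi_n(\crit(L_c,\calV_{p,\iota}))$, and the degree of the rational parametrization is at most the cardinality of $S$, hence at most the sum of these solution counts. So the outer structure $\sum_{p=0}^r \binom{m}{p}\theta(m,n,p)$ is immediate once $\theta(m,n,p)$ is shown to bound the number of rank-$p$ critical points on a single incidence variety $\calV_{p,\iota}$.

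The core of the argument is therefore a Bézout-type count for $\crit(L_c, \calV_{p,\iota})$. I would work with the clean local model: by Proposition~\ref{smoothness}, $\calV_{p,\iota}$ is smooth and equidimensional of codimension $c_p = m(m-p) + \binom{m-p+1}{2}$, cut out by $f_{red}$, and the critical-point equations are the Lagrange system \eqref{LagSys}. The bilinear structure is what one exploits: the defining equations $A(x)Y(y) = 0$ are bilinear in $(x,y)$ (linear in each group separately, since $A(x)$ is affine in $x$ and $Y$ is linear in $y$), the normalization $Y_\iota - \Id_{m-p} = 0$ only involves $y$, and the Lagrange equations $(g,h) = z'[\jac f;\ \jac L_c]$ are multilinear in the three blocks $x$, $y$, $z$ with low degrees. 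The standard tool is the multihomogeneous (multilinear) Bézout bound: one assigns the variables to three groups — roughly $n$ variables $x$, $m(m-p)$ variables $y$, and the Lagrange multipliers $z$ — computes the multidegrees of each equation in the system, and reads off the mixed-volume/permanent-type bound as a sum of products of multinomial coefficients. The shape of $\theta(m,n,p)$, a sum over $k$ of $\binom{c_p}{n-k}\binom{n-1}{k+c_p-1-p(m-p)}\binom{p(m-p)}{k}$ with $c_p = (m-p)(m+p+1)/2$, is exactly what such a computation produces; note $c_p$ here equals $m(m-p)+\binom{m-p+1}{2}$ rewritten, and $p(m-p)$ is the number of "free" entries of $Y^{(1)}$ in the local model $\tilde f$, so the three binomials correspond to the three variable groups $x$ (dimension $n$), the reduced $y$-block (dimension $p(m-p)$ after eliminating $Y^{(2)}$), and the multipliers.

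Concretely, the steps are: (i) reduce to the local system $(\tilde f, \tilde g, \tilde h)$ around a point of rank exactly $p$, which after the substitutions described before the proof of Proposition~\ref{finiteness} eliminates $Y^{(2)}$ and half the multipliers, leaving a square-ish system in $x$, $Y^{(1)}$, and the remaining multipliers; (ii) homogenize this system with respect to the three variable groups and record the multidegree vector of each equation — the Schur-complement equations $\Sigma(N) = 0$ are linear in $x$, the equation $Y^{(1)} + N^{-1}QY^{(2)} = 0$ contributes its bilinear degree, and each Lagrange equation is degree $1$ in the multipliers and bounded degree in $x,y$; (iii) apply the multilinear Bézout theorem (e.g.\ as in \cite{FauSafSpa} or \cite{HNS2014}) to bound the number of isolated solutions by the coefficient of the appropriate monomial in the product of the linear forms, which unwinds to the stated sum over $k$; (iv) sum over $\iota$ (a factor $\binom{m}{p}$) and over $p$ (the outer sum), and observe that the degree of the univariate polynomial $q$ in the rational parametrization is at most the total number of points, since {\sf Union} merges parametrizations without increasing the point count beyond the union's cardinality.

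The main obstacle I anticipate is step (ii)–(iii): getting the multidegrees right and choosing the grouping of variables so that the multilinear Bézout bound actually collapses to precisely $\theta(m,n,p)$ rather than a cruder over-count. In particular one has to be careful that the denominators $N^{-1}$ appearing in the local equations do not inflate the degrees — this is handled by clearing denominators and working on the chart $\det N \neq 0$, but one must check the cleared system has the claimed multidegrees and that no spurious solutions at infinity or on $\{\det N = 0\}$ are counted. A secondary subtlety is that $\theta(m,n,p)$ as written uses $c_p = (m-p)(m+p+1)/2$, which must be reconciled with the codimension $m(m-p)+\binom{m-p+1}{2}$ from Proposition~\ref{smoothness}; verifying this identity and tracking how $p(m-p)$ enters as the dimension of the reduced $y$-block is a bookkeeping step that is easy to get slightly wrong. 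Everything else — the reduction via Theorems~\ref{reduction} and \ref{correctness}, and the fact that degree of the parametrization $\le \#S$ — is routine given the earlier results.
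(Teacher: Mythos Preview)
Your high-level plan is right: the outer summation $\sum_{p=0}^r\binom{m}{p}\theta(m,n,p)$ is immediate once $\theta(m,n,p)$ bounds the number of solutions of a single Lagrange system, and the tool is indeed the multilinear B\'ezout bound applied to three variable groups. But the specific reduction you propose---passing through the Schur complement local model $(\tilde f,\tilde g,\tilde h)$ from the proof of Proposition~\ref{finiteness}---is the wrong vehicle, and the obstacle you flag is fatal for it. The entries of $\Sigma(N)=R-PN^{-1}Q$ are rational functions whose numerators, after clearing $\det N$, have degree $p+1$ in $x$; the cleared system is therefore \emph{not} of multidegree $(1,1,0)$, and the multilinear B\'ezout count would overshoot $\theta(m,n,p)$ badly. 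So steps (ii)--(iii) of your sketch do not go through as stated.

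The paper avoids this entirely by a much simpler reduction that never introduces denominators. Instead of localizing via $N^{-1}$, one substitutes $Y_\iota=\Id_{m-p}$ directly into $f_{red}$ and drops the equations $Y_\iota-\Id_{m-p}=0$. What remains is a system $\widetilde f$ of exactly $c_p=(m-p)(m+p+1)/2$ polynomial equations (this is $\#f_{red}-(m-p)^2$, which matches your bookkeeping worry) in the variables $x$ and $\overline y=(y_{i,j}:i\notin\iota)$, where $\#\overline y=p(m-p)$. Because $A(x)Y(y)$ is already bilinear, each equation in $\widetilde f$ has multidegree at most $(1,1)$ in $(x,\overline y)$ with no clearing needed. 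Assuming w.l.o.g.\ $c=e_1$, the Lagrange conditions $\overline z'\jac_1\widetilde f=0$ (with $\overline z=(z_1,\ldots,z_{c_p-1},1)$) then split cleanly: the $n-1$ equations from $\partial/\partial x_i$ have multidegree $(0,1,1)$ (since $\partial\widetilde f/\partial x_i$ depends only on $\overline y$), and the $p(m-p)$ equations from $\partial/\partial\overline y_j$ have multidegree $(1,0,1)$. The multilinear B\'ezout theorem now gives the degree as the coefficient of $s_x^n s_y^{p(m-p)} s_z^{c_p-1}$ in $(s_x+s_y)^{c_p}(s_y+s_z)^{n-1}(s_x+s_z)^{p(m-p)}$, which expands to exactly the sum defining $\theta(m,n,p)$.

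In short: replace your step (i) by the direct substitution $Y_\iota=\Id_{m-p}$ rather than the Schur complement localization, and then (ii)--(iii) become straightforward with the precise multidegrees $(1,1,0)$, $(0,1,1)$, $(1,0,1)$.
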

\begin{proof}
We first prove that $\theta$ gives a bound on the
degree of the ideal generated by $\lag(\iota)$, that is on the degree of the partial rational parametrization $Q_\iota$.
Since $Q$ encodes the union of all algebraic sets defined by the $Q_{\iota}'$s, and since the previous degree does not
depend on $\iota$, we conclude by adding all such bounds (each one multiplied by $\binom{m}{p}$, the number of subset
$\iota$ of cardinality $m-p$). This relies on an equivalent construction of $\lag(\iota)$ which is given below.

Given $p \in \{0, \ldots, r\}$, we fix a subset $\iota \subset \{1, \ldots, m\}$ with $\#\iota=m-p$.
We exploit the multilinearity of the polynomial system $f$ defining the incidence variety $\calV_{p,\iota}$.
First, we eliminate variables $y_{i,j}$, with $i \in \iota$, by substituting $Y_\iota = \Id_{m-p}$;
we also eliminate polynomials $Y_\iota - \Id_{m-p}$ in $f_{red}$ ({\it cf.} Proposition \ref{smoothness}).
One obtains a polynomial system $\widetilde{f}$ of cardinality $c_p:=(m-p)(m+p+1)/{2}$. Moreover, by
construction, $\widetilde{f}$ is constituted by $c_p$ polynomials of bi-degree at most $(1,1)$ with
respect to the groups of variables $x$ and
\begin{equation}
\label{overy}
\overline{y}:=(y_{i,j} \mymid i \notin \iota).
\end{equation}
We also suppose without loss of generality that the linear map $\ell_c$ in Problem \eqref{sdpr}
defines the projection over $x_1$, that is that $c=(1,0,\ldots,0)$. Hence, the system $\lag(\iota)$ is
equivalent to the following. We consider the $c_p$ elements in $\widetilde{f}$. Let $\jac \widetilde{f}$ be the
Jacobian matrix of $\widetilde{f}$ w.r.t. variables $x,\overline{y}$, and let $\jac_1$ be the matrix obtained
by eliminating the first column from $\jac \widetilde{f}$. The critical points of the projection over $x_1$
restricted to $\zeroset{\widetilde{f}}$  are then defined by $\widetilde{f}=0$ and by $z'\jac_1=0$, where
\begin{equation}
\label{overz}
\overline{z}:=(z_1, \ldots, z_{c_p-1}, 1)
\end{equation}
is a non-zero vector of $c_p-1$ Lagrange multipliers.

Hence $\lag(\iota)$ is equivalent to a polynomial system of
\begin{itemize}
\item
$c_p$ equations of bi-degree at most $(1,1,0)$ w.r.t. $x,\overline{y},\overline{z}$; 
\item
$n-1$ equations of bi-degree at most $(0,1,1)$ w.r.t. $x,\overline{y},\overline{z}$; 
\item
$p(m-p)$ equations of bi-degree at most $(1,0,1)$ w.r.t. $x,\overline{y},\overline{z}$.
\end{itemize}
We call this new polynomial system $\widetilde{\lag(\iota)}$. By the Multilinear B\'ezout Theorem ({\it cf.}
for example \cite[Prop.\,11.1.1]{din2013nearly}) the degree of $\widetilde{\lag(\iota)}$ is bounded above by the coefficient
of $s_x^ns_y^{p(m-p)}s_z^{c_p-1}$ in
\[
(s_x+s_y)^{c_p}(s_y+s_z)^{n-1}(s_x+s_z)^{p(m-p)},
\]
which is exactly $\theta(m,n,p)$.
\end{proof}

\subsection{Bounds on the arithmetic complexity}
\label{complexity-section}

Our goal in this section is to bound the number of arithmetic operations over $\QQ$ performed by the main
subroutine of {\sc SolveSDP}, which is the computation of the rational parametrization $Q_\iota$ done by
{\sf RatPar}.
Before that, we give bounds for the complexity of routines {\sf Project} and {\sf Union}. Let
$\widetilde{\lag(\iota)} \subset \QQ[x,\overline{y},\overline{z}]$ ({\it cf.} \eqref{overy} and \eqref{overz})
be the equivalent Lagrange system built in the proof of Proposition \ref{degbound},
and $\theta=\theta(m,n,p)$ be the bound on the degree of $\widetilde{\lag(\iota)}$. From \cite[Chapter\,10]{din2013nearly},
one gets the following estimates:
\begin{itemize}
\item
  by \cite[Lemma\,10.1.5]{din2013nearly}, {\sf Project} can be performed with at most $n^2 \theta(m,n,p)^2$ arithmetic
  operations;
\item
  by \cite[Lemma\,10.1.3]{din2013nearly}, {\sf Union} can be performed with at most $n (\sum_{s=0}^p \binom{m}{s}\theta(m,n,s))^2$
  arithmetic operations.
\end{itemize}

We now turn to the complexity of {\sf RatPar}. Our complexity model is the symbolic homotopy algorithm for
computing rational parametrization in \cite{SymbHom}. This is a probabilistic exact algorithm for solving
zero-dimensional systems via rational parametrizations, exploiting their sparsity. It allows to express
the arithmetic complexity of {\sf RatPar} as a function of geometric invariants of the system $\widetilde{\lag(\iota)}$
(mainly of its degree, which is bounded by $\theta(m,n,p)$, {\it cf.} Proposition \ref{degbound}).

We briefly recall the construction of the homotopy curve in \cite{SymbHom}. This is similar to \cite[Sec.4]{HNS2015a}.
Let $t$ be a new variable, and recall that $\widetilde{\lag(\iota)}$ contains quadratic polynomials with
bilinear structure with respect to the three groups of variables $x,\overline{y},\overline{z}$. Let $g \subset
\QQ[x,\overline{y},\overline{z}]$ be a new polynomial system such that: (1) $\# g = \#\widetilde{\lag(\iota)}$,
(2) the $i-$th entry of $g$ is a polynomial with the same monomial structure as the $i-$th entry of $\widetilde{\lag(\iota)}$,
and (3) the solutions of $g$ are finitely many and known. Since $\widetilde{\lag(\iota)}$ is bilinear in $x,\overline{y},
\overline{z}$, the system $g$ can be obtained by considering suitable products of linear forms in, respectively, $x$,
$\overline{y}$ and $\overline{z}$. The algorithm in \cite{SymbHom} builds the homotopy curve $\zeroset{h}$ defined by
\[
h = t \widetilde{\lag(\iota)} + (1-t) g \subset \QQ[x,\overline{y},\overline{z},t].
\]

The proof of the following lemma is technical and we omit it.
\begin{lemma}
\label{degcurve}
Let $\theta(m,n,p)$ be the bound on the degree of $\zeroset{\widetilde{\lag(\iota)}}$ computed in Proposition \ref{degbound}.
The degree of the homotopy curve $\zeroset{h}$ is in
\[
\bigO((n+c_p+p(m-p)) \min\{n,c_p\} \theta(m,n,p)).
\]
\end{lemma}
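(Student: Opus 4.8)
The plan is to bound the degree of the homotopy curve $\zeroset{h}$ by viewing it as a curve in the affine space $\CC^{n+c_p+p(m-p)} \times \CC_t$ cut out by the $c_p + (n-1) + p(m-p)$ bilinear equations of $h$, whose multihomogeneous structure is inherited from $\widetilde{\lag(\iota)}$ with the new variable $t$ playing the role of a scalar parameter. Concretely, each polynomial in $h$ has, with respect to the three groups of variables $x$, $\overline{y}$, $\overline{z}$ and the single variable $t$, a multidegree that is the corresponding multidegree of $\widetilde{\lag(\iota)}$ (one of $(1,1,0)$, $(0,1,1)$, $(1,0,1)$) augmented in the $t$-slot by $1$, since $h = t\,\widetilde{\lag(\iota)} + (1-t)g$ is linear in $t$. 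So $\zeroset{h}$ is (generically) a curve defined by a multilinear system in four groups of variables of sizes $n$, $c_p$, $p(m-p)$, and $1$.

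The main step is then a multilinear B\'ezout count exactly as in the proof of Proposition \ref{degbound}, but now with the extra factor for $t$. Introducing formal variables $s_x, s_y, s_z, s_t$, the degree of $\zeroset{h}$ is bounded by the sum, over all ways of "forgetting" one of the defining equations, of the appropriate coefficient in the product
\[
(s_x+s_y+s_t)^{c_p}\,(s_y+s_z+s_t)^{n-1}\,(s_x+s_z+s_t)^{p(m-p)};
\]
more precisely, one extracts the coefficient monomials of total degree equal to $\dim = n + c_p + p(m-p)$ in the four variables. Expanding this product and comparing with the analogous product without $s_t$ that gives $\theta(m,n,p)$, each monomial picking up one factor of $s_t$ contributes a term bounded by $\theta(m,n,p)$, and the number of such contributions is governed by how many equations can "spend" their $s_t$, which is at most $c_p + (n-1) + p(m-p)$; a slightly finer accounting, observing that the $t$-slot can absorb at most one unit from the generically one extra dimension introduced by $t$, reduces this to the stated $\bigO((n+c_p+p(m-p))\min\{n,c_p\}\,\theta(m,n,p))$, where the $\min\{n,c_p\}$ comes from bounding the binomial prefactor in the ways of choosing which of the $c_p$ bilinear equations in $x$ and $\overline{y}$ contribute their $x$- versus $y$-degree (this is already implicit in the shape of $\theta$).

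The delicate point — and the reason the authors say the proof is technical and omit it — is the bookkeeping that turns the raw multilinear B\'ezout sum into the clean closed form: one must check that the homotopy genuinely adds only one to the ambient dimension (so the count is a curve degree, not an overcount of a higher-dimensional component), that the linear forms defining $g$ can be chosen so that the multihomogeneous B\'ezout bound is actually attained or at least not exceeded for $\zeroset{h}$, and that the triple-sum defining $\theta$ can be pulled out of the augmented count with only the polynomial prefactor $(n+c_p+p(m-p))\min\{n,c_p\}$ remaining. I would carry this out by writing $h$'s multidegree matrix explicitly, invoking \cite[Prop.\,11.1.1]{din2013nearly} (the same reference used for Proposition \ref{degbound}) for the curve, and then estimating the coefficient extraction by grouping terms according to which variable group receives the single $s_t$; the hardest part is this last combinatorial estimate, and it is precisely the step that the paper defers.
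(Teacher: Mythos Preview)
The paper explicitly omits the proof of this lemma (``The proof of the following lemma is technical and we omit it''), so there is no argument in the paper to compare your proposal against. Your outline is the natural one and is consistent with how analogous homotopy-curve degree bounds are obtained in the references the paper points to (notably \cite[Sec.~4]{HNS2015a} and \cite{SymbHom}): view $h$ as a system that is multilinear in the four blocks $(x,\overline{y},\overline{z},t)$, intersect with a generic hyperplane to reduce the curve to a zero-dimensional set, and apply the multilinear B\'ezout bound \cite[Prop.~11.1.1]{din2013nearly}.

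Two points where your sketch is loose. First, a bookkeeping slip: the block $\overline{z}$ has $c_p-1$ variables, not $c_p$ (see \eqref{overz}); this is harmless for the $\bigO$ statement but should be tracked in the actual coefficient extraction. Second, and more substantively, your account of the factor $\min\{n,c_p\}$ is not a proof. Saying it ``comes from bounding the binomial prefactor'' in how the $c_p$ equations of type $(1,1,0)$ distribute their degree between $x$ and $\overline{y}$ is a plausible heuristic, but the actual mechanism is that in the augmented B\'ezout sum the nonvanishing terms are constrained by $s_x^{n}$ and $s_z^{c_p-1}$ simultaneously, and when one compares the coefficient of $s_x^{n}s_y^{p(m-p)}s_z^{c_p-1}s_t^{1}$ to $\theta(m,n,p)$ the extra combinatorial weight is controlled by the smaller of the two exponent caps. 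Pinning this down requires writing out the trinomial expansions of $(s_x+s_y+s_t)^{c_p}$, $(s_y+s_z+s_t)^{n-1}$, $(s_x+s_z+s_t)^{p(m-p)}$, summing over the placements of the single $s_t$, and bounding the resulting shifted binomials against those in $\theta$; you have correctly identified this as the technical core, but you have not carried it out, and neither does the paper.
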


The degree of $\zeroset{\widetilde{\lag(\iota)}}$ and of the homotopy curve $\zeroset{h}$ are the main ingredients
of the complexity bound for the algorithm \cite{SymbHom}, which is given by \cite[Prop.\,6.1]{SymbHom}. We use this complexity bound
in our estimate. Indeed, let us denote by
\begin{align*}
\Delta_{xy} & = \{ 1, x_i, y_j, x_iy_j \mymid i=1,\ldots, n, j=1, \ldots, p(m-p) \} \\
\Delta_{yz} & = \{ 1, y_j, z_k, y_jz_k \mymid j=1,\ldots, p(m-p), k=1 \ldots, c_p-1 \} \\
\Delta_{xz} & = \{ 1, x_i, z_k, x_iz_k \mymid i=1,\ldots, n, k=1, \ldots, c_p-1 \}
\end{align*}
the supports of polynomials in $\widetilde{\lag(\iota)}$. To state our complexity result for {\sc SolveSDP},
we suppose that all the regularity assumptions on $A(x)$ are satisfied.

\begin{theorem}
\label{complexity}
Suppose that $A \in \zarA$ (defined in Proposition \ref{smoothness}). Then {\sc SolveSDP} runs within
\[
\bigO \left( \sum_{p=0}^r \binom{m}{p} (n p c_p (m-p))^5 \theta(m,n,p)^2 \right)
\]
arithmetic operations over $\QQ$, where $c_p=(m-p)(m+p+1)/2$.
\end{theorem}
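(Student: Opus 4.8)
\textbf{Proof plan for Theorem \ref{complexity}.}
The plan is to aggregate the per-subroutine complexity estimates already collected in this section, weighted by the number of index sets $\iota$ and summed over the ranks $p$. First I would recall that for each fixed $p \in \{0,\ldots,r\}$ and each $\iota \subset \{1,\ldots,m\}$ with $\#\iota = m-p$, the algorithm executes one call to {\sf CheckReg}, one call to {\sf Optimize}, {\sf Project}, {\sf RatPar}, and {\sf Union}. Since $A \in \zarA$, {\sf CheckReg} returns {\tt true} for every $p$ (by Proposition \ref{smoothness}), so no early exit occurs and all $\binom{m}{p}$ index sets are processed at level $p$. The cost of {\sf Optimize} is negligible compared to the algebraic steps, being the symbolic assembly of the Lagrange system $\widetilde{\lag(\iota)}$; and {\sf Project} costs at most $n^2 \theta(m,n,p)^2$ by \cite[Lemma\,10.1.5]{din2013nearly}, while {\sf Union} costs at most $n\big(\sum_{s=0}^p \binom{m}{s}\theta(m,n,s)\big)^2$ by \cite[Lemma\,10.1.3]{din2013nearly}; both are dominated by the bound we obtain for {\sf RatPar}, so they will be absorbed into the $\bigO$ at the end.

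Next I would estimate {\sf RatPar} via the symbolic homotopy algorithm of \cite{SymbHom}. The relevant inputs are: the number of variables, which is $N := n + c_p + p(m-p)$ (the groups $x$, $\overline{y}$, $\overline{z}$ after the reductions of Proposition \ref{degbound}); the degree of the zero-dimensional system $\widetilde{\lag(\iota)}$, bounded by $\theta(m,n,p)$ (Proposition \ref{degbound}); and the degree of the homotopy curve $\zeroset{h}$, bounded by $\bigO(N \min\{n,c_p\}\,\theta(m,n,p))$ (Lemma \ref{degcurve}). Feeding these into the complexity statement \cite[Prop.\,6.1]{SymbHom} — which is, up to logarithmic factors, polynomial in the number of variables, the degree of the target system, and the degree of the tracked curve, with the curve-degree entering quadratically — one gets a bound of the shape $\bigO\big(N^{c} \cdot (\deg \zeroset{h})^2\big)$ for a small absolute exponent $c$; substituting $\deg\zeroset{h} = \bigO(N\min\{n,c_p\}\theta)$ and crudely bounding $\min\{n,c_p\} \le c_p$ and $p(m-p) \le N$ yields a per-$\iota$ cost of $\bigO\big((n\,p\,c_p(m-p))^5 \theta(m,n,p)^2\big)$ after collecting the polynomial-in-$N$ factors (the exponent $5$ being the rounded-up aggregate of the exponent from \cite[Prop.\,6.1]{SymbHom} and the extra factor $N^2$ from squaring the curve degree).

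Finally I would sum over the $\binom{m}{p}$ choices of $\iota$ at each level and over $p = 0, \ldots, r$, so that the total is
\[
\sum_{p=0}^{r} \binom{m}{p} \cdot \bigO\big((n\,p\,c_p(m-p))^5 \theta(m,n,p)^2\big),
\]
which is the claimed bound; the contributions of {\sf Project} and {\sf Union} are swallowed since their costs are at most quadratic in $\theta$ (resp. in a partial sum of the $\theta$'s) and hence no larger, up to constants, than the {\sf RatPar} term. The main obstacle — and the point where I would have to be careful rather than cavalier — is matching the actual exponent coming out of \cite[Prop.\,6.1]{SymbHom} together with Lemma \ref{degcurve} against the clean exponent $5$ in the statement: one must check that all the polynomial-in-$N$ overhead (number of variables, size of the supports $\Delta_{xy},\Delta_{yz},\Delta_{xz}$, and the logarithmic factors the $\bigO$ hides) really is absorbed by writing $n\,p\,c_p(m-p)$ raised to the fifth power, i.e. that $N = n + c_p + p(m-p)$ and each support size is $\bigO(n\,p\,c_p(m-p))$ up to constants, which is a routine but slightly delicate bookkeeping step. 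Everything else is direct substitution into results already established in the excerpt.
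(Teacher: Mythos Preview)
Your overall architecture matches the paper's proof exactly: invoke \cite[Prop.\,6.1]{SymbHom} for {\sf RatPar}, plug in the degree bounds from Proposition~\ref{degbound} and Lemma~\ref{degcurve}, then sum over $\iota$ and $p$, absorbing {\sf Project} and {\sf Union}. Two concrete points need correcting, however, and they are precisely where you flagged the bookkeeping as delicate.

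First, the complexity statement in \cite[Prop.\,6.1]{SymbHom} is not of the form $\bigO(N^c (\deg\zeroset{h})^2)$; it is $\bigO\big((\tilde n^2 N \log\Delta + \tilde n^{\omega+1})\,e\,e'\big)$, where $e = \deg\zeroset{\widetilde{\lag(\iota)}}$ and $e' = \deg\zeroset{h}$ each enter \emph{linearly}. The factor $\theta^2$ therefore does not come from squaring the curve degree: it comes from $e \le \theta$ and the single factor $\theta$ inside the bound for $e'$ in Lemma~\ref{degcurve}. Your mechanism for producing $\theta^2$ is wrong even though the exponent happens to land correctly.

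Second, your $N$ is not the quantity that appears in \cite{SymbHom}. In the paper, $\tilde n = n + p(m-p) + c_p - 1$ is the number of variables, while $N$ denotes the \emph{total support size} $c_p\,\#\Delta_{xy} + (n-1)\,\#\Delta_{yz} + p(m-p)\,\#\Delta_{xz}$, which is already $\bigO(n\,p\,c_p(m-p))$ as a product. Using $\tilde n \le N$ and $\omega \le 3$, the bracket $(\tilde n^2 N \log\Delta + \tilde n^{\omega+1})$ is $\bigO(N^4)$, and one more factor of $N$ comes from $e'$ via Lemma~\ref{degcurve}, giving $\bigO(N^5\theta^2)$ directly---no sum-versus-product gymnastics required. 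Once you identify $N$ correctly, the exponent~$5$ falls out without the hedging in your last paragraph.
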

\begin{proof}
Complexity bounds for subroutines {\sf Project} and {\sf Union} have been computed earlier in Section \ref{complexity-section}.

By \cite[Prop.6.1]{SymbHom}, one can compute a rational pa\-ra\-me\-tri\-za\-tion of $\widetilde{\lag(\iota)}$ within
$\bigO((\tilde{n}^2 N \log \Delta + \tilde{n}^{\omega+1})ee')$ where: $\tilde{n} = n+p(m-p)+c_p-1$ is the number of variables
in $\widetilde{\lag(\iota)}$; $N=c_p \#\Delta_{xy}+(n-1)\#\Delta_{yz}+ p(m-p)\#\Delta_{xz} \in \bigO(npc_p(m-p))$; $\Delta=
\max\{\|q\| \mymid q \in \Delta_{xy} \cup \Delta_{yz} \cup \Delta_{xz}\} \leq \tilde{n}$; finally $e$ is the degree of
$\zeroset{\widetilde{\lag(\iota)}}$ and $e'$ the degree of $\zeroset{h}$, and $\omega$ is the exponent of matrix
multiplication.
Applying bounds computed in Proposition \ref{degbound} and Lemma \ref{degcurve}, and since $\tilde{n} \leq N$
and $\omega \leq 3$, we conclude that {\sf RatPar} runs within $\bigO(N^5 \theta(m,n,p)^2)$ arithmetic operations.
We conclude by recalling that for every $p=0, \ldots, r$, the routine {\sf RatPar} runs $\binom{m}{p}$ times.
\end{proof}

\section{Experiments}
\label{exper}

We present results of our tests on a Maple implementation of the algorithm {\sc SolveSDP}. We integrate this implementation
in the Maple library {\sc spectra}, {\it cf.} \cite{spectra}, whose main goal is to implement efficient exact algorithms for semidefinite
programming and related problems. The Version 1.0 of {\sc spectra} can be freely downloaded from the following web page:
\begin{center}
\url{www.mathematik.tu-dortmund.de/sites/simone-naldi/software}
\end{center}
The rational parametrizations are computed using Gr\"obner bases via the Maple implementation of the software {\sc FGb}
\cite{faugere2010fgb}, exploiting the multilinearity of Lagrange systems already exhibited in Section \ref{degboundsoutputrep}
({\it cf.} \cite{FM11} for a tailored algorithm). The regularity assumptions on the input $(A,c)$ are also checked by testing
the emptiness of complex algebraic sets, hence performing Gr\"obner bases computations.

In Section \ref{randSDP} we use {\sc SolveSDP} to solve generic rank-con\-stra\-i\-ned semidefinite programs, giving details of
timings and output degrees of our implementations. In Section \ref{sosdec} we consider an application of our results for computing
certificates of nonnegativity for multivariate polynomials.

\subsection{Random SDP}
\label{randSDP}

In this test, we draw $(n+1)-$tuples of random $m \times m$ symmetric linear matrices $A_0,A_1, \ldots, A_n$ with rational coefficients.
The numerators and denominators of the rational entries are generated with respect to the uniform distribution in a given interval
(in our case, in $\ZZ \cap [-10^3,10^3]$). We also draw random linear forms $\ell_c = c^Tx$, and we consider different rank-constrained
semidefinite programs.

As explained in Section \ref{algorithm}, the most costly routine in {\sc SolveSDP} is the computation of rational parametrizations of
the Lagrange systems $\lag(\iota)$ defined in \eqref{LagSys}, namely Step 7 in the formal description in Section \ref{formalDescr}.
We report in Table \ref{tab:dense:sym} on timings (column {\sf SolveSDP}) and output degrees (column {\sf Deg}) relative to the
computation of the rational parametrization of a single Lagrange system. Ideally, we recall that to get the total time for {\sc SolveSDP}
one should take the sum of these timings for $p=0, \ldots, r$ weighted by $\binom{m}{p}$ (similarly to the complexity bound in Theorem
\ref{complexity}).

\begin{table}[!ht]
\centering
{\small
\begin{tabular}{|l|rr|l|rr|}
\hline
$(m,n,p)$ & {\sf SolveSDP} & {\sf Deg} & $(m,n,p)$ & {\sf SolveSDP} & {\sf Deg} \\
\hline
\hline
$(3,3,2)$ & 11 s   & 4   & $(5,3,3)$ & 3 s    & 20  \\
$(4,3,2)$ & 2 s    & 10  & $(5,4,3)$ & 1592 s & 90  \\
$(4,4,2)$ & 9 s    & 30  & $(5,5,3)$ & 16809 s & 207 \\
$(4,5,2)$ & 29 s   & 42  & $(5,2,4)$ & 7 s    & 20  \\
$(4,6,2)$ & 71 s   & 30  & $(5,3,4)$ & 42 s   & 40  \\
$(4,7,2)$ & 103 s  & 10  & $(5,4,4)$ & 42 s   & 40  \\
$(4,3,3)$ & 10 s   & 16  & $(5,5,4)$ & 858 s  & 16  \\
$(4,4,3)$ & 21 s   & 8   & $(6,6,3)$ & 704 s  & 112 \\
$(5,7,2)$ & 25856 s & 140 & $(6,3,5)$ & 591 s  & 80  \\
\hline
\end{tabular}
}
\caption{Optimization over $\calD_p \cap \RR^n$}
\label{tab:dense:sym}
\end{table}
We remark that our implementation is able to tackle from small to medium-size input semidefinite programs and different rank
constraints. As an example, for $(m,n,p)=(5,7,2)$ one should compute the critical points of a general linear form over
the algebraic set defined by $\binom{5}{3}\binom{5}{3}=100$ polynomials of degree $3$ in $7$ variables, which is unreachable
by the state-of-the-art algorithms: our implementation computes a rational parametrization of degree 140 after
seven hours. Further, when the size $m$ is fixed, the cost in terms of computation seems to reflect suitably
both the growth of output degree and of the number of variables $n$.

Moreover, it is worth to highlight that the entries of column {\sf Deg} coincide exactly with the {\it algebraic degree of SDP}
with parameters $(m,n,p)$, as computed in \cite[Table 2]{stu}. This fact is not obvious. Indeed, in \cite{stu} the algebraic
degree of SDP in rank $p$ (that is, on a solution of rank $p$) is understood as the degree of the complex variety $(\CC\calD_p)\chk{}$
dual to the variety $\CC\calD_p = \{x \in \CC^n \mymid \rank(A(x)) \leq p\}$. Our algorithm builds intermediate
incidence varieties whose degree is typically larger than the degree of the determinantal varieties and of their duals: hence one
could {\it a priori} expect the degree of the output representation to be larger than the expected degree (which si computed in \cite{stu}).
Even though the estimate of the output degree in Proposition
\ref{degbound} does not depend explicitly on formulas in \cite{stu}, but only on multilinear bounds, this fact is remarkable and
represents a guarantee of optimality of our method.

\subsection{Sum-Of-Squares certificates}
\label{sosdec}

In this final section, we consider an interesting application of rank-constrained semidefinite programming.
Let $u=(u_1, \ldots, u_k)$ and let $f \in \RR[u]_{2d}$ be a homogeneous polynomial of degree $2d$, for $d \geq 1$.
Let $b = \{\prod_i u_i^{j_i}\}_{\sum_i j_i=d}$ be the monomial basis of $\RR[u]_{d}$. The sum-of-squares (SOS)
decompositions of $f$ are parametrized by the so-called {\it Gram spectrahedron} of $f$:
\[
{\cal G}(f) = \{X \in \ss_{\binom{k+d-1}{d}}(\RR) \mymid X \succeq 0, \,\, f = b^T X b\},
\]
and any $X \in {\cal G}(f)$ is called a {\it Gram matrix} for $f$, {\it cf.} \cite{PowersWor}. Remark here that
the constraint $f = b^T X b$ is linear in the entries of $X$.
If $f=f_1^2+\cdots+f_r^2$, we say that $f$ has a SOS decomposition {\it of length} $r$. We deduce that deciding whether $f$ has a
SOS decomposition of length at most $r$ is equivalent to the following rank-constrained semidefinite program:
\begin{equation}
\label{rnkconslmi}
f = b^T X b \qquad X \succeq 0 \qquad \rank\,X \leq r.
\end{equation}
We have generated nonnegative polynomials by taking sums of squares of random homogeneous polynomials of degree $d$.
Applying {\sc SolveSDP} to this subfamily of problem $\sdp_r$, we have been able to handle example with $k \leq 3$ and
$2d \leq 6$, corresponding to Gram matrices of size 10. We believe that this is due to the particular sparsity of these
linear matrices. We give below direct examples of how the algorithm developed in this paper can be used in practice
to compute certificates of positivity for a given $f \in \RR[u]$.

\begin{example}[Chua, Plaumann, Sinn, Vinzant]
We consider the homogeneous binary sextic
\[
f = u_1^6-2u_1^5u_2+5u_1^4u_2^2-4u_1^3u_2^3+5u_1^2u_2^4-2u_1u_2^5+u_2^6 \in \RR[u_1,u_2]_6
\]
in \cite[Ex.\,4.4]{chuaetal}, and its Gram matrix
\[
A=
\begin{bmatrix}
1 & -1 & x_1 & -2-x_2 \\
-1 & -2x_1+5 & x_2 & x_3 \\
x_1 & x_2 & -2x_3+5 & -1 \\
-2-x_2 & x_3 & -1 & 1
\end{bmatrix}.
\]
Essentially by the Fundamental Theorem of Algebra, since $f$ is globally positive on $\RR^2$, we know
that it can be expressed as a sum of two squares. In a Maple worksheet, after the library {\sc spectra}
and the matrix $A(x)$ above has been entered, with the command
\begin{verbatim}
> SolveLMI(A,{rnk,deg,all},[2,3]);
\end{verbatim}
our library computes many solutions corresponding to different SOS-representations
of $f$. In particular, decompositions of length $2$ (minimal) and $3$, with information on the rank
of $A$ on every solution, and on the algebraic degree of its entries. It solves the rank-constrained
semidefinite program given in \eqref{rnkconslmi}. We give below the approximation to 20 decimal digits
of two SOS-representations, one of length $2$:
\[
\begin{array}{l}
x_1 \in \Big[-\frac{1617666671225218599972013}{604462909807314587353088}, -\frac{1617666671225218599972005}{604462909807314587353088}\Big] \approx -2.6762050160213870985 \\[.3em]
x_2 \in \Big[-\frac{3368250337925821251358839}{1208925819614629174706176}, -\frac{3368250337925821251358827}{1208925819614629174706176}\Big] \approx -2.7861513777574232861 \\[.3em]
x_3 \in \Big[-\frac{3235333342450437199944021}{1208925819614629174706176}, -\frac{3235333342450437199944017}{1208925819614629174706176}\Big] \approx -2.6762050160213870985 \\[.3em]
\end{array}
\]
and one of length $3$:
\[
\begin{array}{l}
x_1 \in \Big[\frac{3203539382882212253342931}{2417851639229258349412352}, \frac{1601769691441106126671543}{1208925819614629174706176}\Big] \approx 1.3249528345351282960 \\[.3em]
x_2 \in \Big[-\frac{2700826142354717756217093}{2417851639229258349412352}, -\frac{1350413071177358878108513}{1208925819614629174706176}\Big] \approx -1.1170355114161030782 \\[.3em]
x_3 \in \Big[\frac{1696463549117506376965235}{1208925819614629174706176}, \frac{3392927098235012753930515}{2417851639229258349412352}\Big] \approx 1.4032817577329022769 \\[.3em]
\end{array}
\]
In addition, some {\it rational} SOS-representations are computed, such as
\[
\begin{array}{lcccl}
x_1 \in [0,0]   & &            & & x_1 \in [2,2] \\
x_2 \in [-2,-2] & & \text{and} & & x_2 \in [-2,-2] \\
x_3 \in [2,2]   & &            & & x_3 \in [0,0] \\
\end{array}
\]
Finally, the following rational parametrization defines a finite set (of $4$ elements) containing
one point where the matrix $A(x)$ is positive semidefinite and has rank 2:
\begin{align*}
q(t) & = t^4-2t^3-5t^2+16t-11 \\
q_0(t) & = 20t^9-180t^8+576t^7-448t^6  - 1917t^5  + 6130 t^4  - 8058 t^3  + 5475 t^2  - 1787 t + 187  \\
q_1(t) & = 20 t^9  - 156 t^8  + 284 t^7  + 1070 t^6  - 6294 t^5  + 13725 t^4  - 16087 t^3  + 10434 t^2  - 3371 t+ 374 \\
q_2(t) & = -36 t^9  + 330t^8  - 1116 t^7  + 1233 t^6  + 2230 t^5  - 9040 t^4  + 12678 t^3  - 9040 t^2  + 3138 t - 374 \\
q_3(t) & = 20 t^9  - 144 t^8  + 192 t^7  + 1278 t^6  - 6130 t^5  + 12087 t^4  - 12775 t^3  + 7148 t^2  - 1683 t.
\end{align*}

\end{example}

\begin{example}
The following ternary quartic
\[
f = u_1^4+u_1u_2^3+u_2^4-3u_1^2u_2u_3-4u_1u_2^2u_3+2u_1^2u_3^2+u_1u_3^3+u_2u_3^3+u_3^4.
\]
is a sum of two squares, while the general nonnegative ternary quartic is a sum of three squares.
This degeneracy can be checked by our algorithm. The Gram matrix of $f$ is a $6 \times 6$ linear
symmetric matrix in $6$ variables $x_1,\ldots,x_6$. The exact representation of the nonnegativity
certificate for $f$ is then given by the following representation:
{\small
\[
x_1 = \frac{3+16t}{-8+24t^2} \qquad x_2 = \frac{8-24t^2}{-8+24t^2} \qquad x_3  = \frac{8+6t+8t^2}{-8+24t^2}
\]
\[
x_4 = \frac{16+6t-16t^2}{-8+24t^2} \qquad x_5 = \frac{-3-16t}{-8+24t^2} \qquad x_6 = \frac{3+16t}{-8+24t^2}
\]
}
where $t$ is one of the roots of $q(t) = t^3-t-1$. The corresponding Gram matrix has rank 2.
\end{example}

\section{Final remarks}
This paper addresses a fundamental problem in computational real algebraic geometry, that is rank-constrained semidefinite programming.
Our algorithm is able to return an exact algebraic representation of all minimizers, with explicit bounds on its output degree and whose
complexity is essentially quadratic on the mentioned degree bound. The algorithm works under assumptions on the input, which are proved
to be generically satisfied. This is done by exploiting the determinantal structure of this optimization problem, and by reducing it to
linear optimization over determinantal varieties. This reduction step allows to manage (non-convex) additional rank constraints.
To the best of our knowledge, this is the first exact algorithm for solving $\sdp_r$.

\section*{Acknowledgements}

The author thanks the organizers of the Thematic Program on Computer Algebra, held at the Fields Institute, Toronto, Canada
from July to December 2015, where this paper was prepared. He thanks in particular E. Kaltofen and \'E. Schost for helpful
discussions about the topic of the paper. Finally, the author thanks the anonymous reviewers for having improved the first
version of the paper \cite{naldiIssac2016} published in the Proceedings of ISSAC 2016, of which this paper represents the
extended version.

\bibliographystyle{elsarticle-harv}
\bibliography{../../../../BIB/biblio}{}

\end{document}